\documentclass[preprint]{elsarticle}

\usepackage{lineno,hyperref}
\usepackage{graphicx,amsmath,amssymb,amsthm,bm}
\modulolinenumbers[5]

\journal{}

\newtheorem{definition}{Definition}
\newtheorem{proposition}{Proposition}
\newtheorem{lemma}{Lemma}
\newtheorem{theorem}{Theorem}

\begin{document}

\begin{frontmatter}

\title{On the existence of fair zero-determinant strategies in the periodic prisoner's dilemma game}

\author{Ken Nakamura}
\address{Graduate School of Sciences and Technology for Innovation, Yamaguchi University, Yamaguchi 753-8512, Japan}

\author{Masahiko Ueda\corref{mycorrespondingauthor}}
\ead{ueda@tmu.ac.jp}
\address{Department of Physics, Tokyo Metropolitan University, Tokyo 192-0397, Japan}

\begin{abstract}
Stochastic games are a framework for investigating long-term interdependence of multi-agent systems with environmental feedback.
When the number of environmental states is one, they are reduced to repeated games.
In repeated games, zero-determinant (ZD) strategies attract much attention in evolutionary game theory, since they can unilaterally control payoffs.
Especially, fair ZD strategies unilaterally equalize the payoff of the focal player and the average payoff of the opponents, and they were found in several games including the social dilemma games.
Although the existence condition of ZD strategies in repeated games was specified, its extension to stochastic games remains largely unclear.
Here, we investigate the existence condition of fair ZD strategies in the periodic prisoner's dilemma game, which is one of the simplest stochastic games.
The periodic prisoner's dilemma game consists of two environmental states and the two states alternate deterministically.
Whereas each stage game is not necessarily the prisoner's dilemma game, the whole game can be regarded as the prisoner's dilemma game on average.
We show that fair ZD strategies do not necessarily exist in the periodic prisoner's dilemma game, in contrast to the repeated prisoner's dilemma game.
Furthermore, we also prove that the Tit-for-Tat strategy, which imitates the opponent's action, is not necessarily a fair ZD strategy in the periodic prisoner's dilemma game, whereas the Tit-for-Tat strategy is always a fair ZD strategy in the repeated prisoner's dilemma game.
Our results highlight difference between ZD strategies in the periodic prisoner's dilemma game and those in the standard repeated prisoner's dilemma game.
\end{abstract}

\begin{keyword}
Repeated games; Zero-determinant strategies; Stochastic games; Payoff control; Prisoner's dilemma
\end{keyword}

\end{frontmatter}


\section{Introduction}
\label{sec:introduction}
Repeated games are a framework for investigating long-term interdependence of multi-agent systems \cite{MaiSam2006}.
Agents can adopt strategies according to all previous actions of all agents.
In repeated games, zero-determinant (ZD) strategies attract much attention in evolutionary game theory \cite{PreDys2012}.
ZD strategies unilaterally enforce linear relationships between payoffs, and they can be used to control multi-agent systems \cite{HCN2018}.
Especially, fair ZD strategies unilaterally equalize the payoff of the focal player and the average payoff of the opponents \cite{PreDys2012,HWTN2014}.
In the repeated prisoner's dilemma game, the Tit-for-Tat (TFT) strategy \cite{RCO1965,AxeHam1981}, which imitates the opponent's previous action, is a fair ZD strategy which unilaterally equalizes the payoffs of two players \cite{PreDys2012}.
In two-player games, because a fair ZD strategy can invade any other strategies by neutral drift in evolutionary game theory, it can be successful in evolution.
Furthermore, in two-player games, if one player adopts a fair ZD strategy, it incentivizes the opponent to optimize the payoffs of both players \cite{Aki2016}.
So far, the existence of fair ZD strategies has been proved in the prisoner's dilemma game \cite{PreDys2012}, the public goods game \cite{HWTN2014,PHRT2015}, continuous donation game \cite{McAHau2016}, two-player potential games \cite{Ued2022}, two-player games without generalized rock-paper-scissors cycles \cite{Ued2022b}, and the Cournot oligopoly game \cite{UYI2025}.
The existence condition of general ZD strategies in repeated games was completely specified \cite{Ued2022b}.

Stochastic games are an extension of repeated games, where a state of an environment exists, and the state changes to another one according to an action profile of players \cite{Sha1953}.
Recently, evolution of cooperation in stochastic games again attracts attention \cite{HSCN2018}.
If transition to a worse state is coupled to defection in the prisoner's dilemma, mutual cooperation can be achieved more easily than in the standard repeated prisoner's dilemma game.
Furthermore, performance of ZD strategies in stochastic games has gradually been investigated \cite{DRWW2021,LiuWu2022,MMHet2025}.
However, since stochastic games are more complicated than repeated games, the existence condition of ZD strategies has not been specified yet.

In Ref. \cite{MMHet2025}, McAvoy and coworkers provided one of the simplest stochastic games.
In this stochastic game, two environmental states exist, and two stage games are alternately played.
Whereas each stage game is not necessarily the prisoner's dilemma game, the whole game can be regarded as the prisoner's dilemma game on average.
Here we call this stochastic game as the \emph{periodic prisoner's dilemma game}.
Although the periodic prisoner's dilemma game is simple, the existence condition of ZD strategies has not been specified.
In particular, while the existence of fair ZD strategies was numerically found in Ref. \cite{MMHet2025}, properties of such fair ZD strategies remain largely unclear.

In this paper, we investigate the existence condition of fair ZD strategies in the periodic prisoner's dilemma game.
Especially, we provide a necessary and sufficient condition for the existence of fair ZD strategies.
Furthermore, we also specify the condition where TFT becomes a fair ZD strategy in the game.
These results highlight difference between the periodic prisoner's dilemma game and the standard repeated prisoner's dilemma game.

The paper is organized as follows.
In Section \ref{sec:model}, we introduce the periodic prisoner's dilemma game.
In Section \ref{sec:preliminaries}, we explain properties of ZD strategies in general stochastic games.
In Section \ref{sec:results}, we provide our main results on the existence of fair ZD strategies in the periodic prisoner's dilemma game.
Section \ref{sec:conclusion} is devoted to concluding remarks.

\section{Model}
\label{sec:model}
We introduce a stochastic game $G:=\left( \mathcal{N}, \Sigma, \left\{ A_j \right\}_{j\in \mathcal{N}}, T_\mathrm{E}, P_\mathrm{E}^{(1)}, \left\{ s_j \right\}_{j\in \mathcal{N}} \right)$ \cite{Sha1953,HSCN2018}.
$\mathcal{N}$ is the set of players.
$\Sigma$ is the set of states of an environment.
$A_j$ is the set of actions of player $j$.
$T_\mathrm{E}: \prod_{k\in \mathcal{N}}A_k \times \Sigma \rightarrow \Delta(\Sigma)$ is the transition function of states, where $\Delta(\Sigma)$ is the probability simplex on $\Sigma$.
$P_\mathrm{E}^{(1)}$ is the probability distribution of the initial state.
$s_j: \prod_{k\in \mathcal{N}}A_k \times \Sigma \rightarrow \mathbb{R}$ is the one-shot payoff function of player $j$.
We write $\mathcal{A}:= \prod_{k\in \mathcal{N}} A_k$ and $A_{-j}:= \prod_{k\neq j} A_k$ for all $j \in \mathcal{N}$.
Furthermore, we introduce the notations $\bm{a}:=\left( a_k \right)_{k\in \mathcal{N}} \in \mathcal{A}$ and $a_{-j}:=\left( a_k \right)_{k\neq j} \in A_{-j}$.
The players can choose actions in each round referring to all histories of actions and states, and we call such plans \emph{strategies}.
When we write an action profile and a state at the $t$-th round as $\bm{a}^{(t)}$ and $\sigma^{(t)}$ with $t\geq 1$, respectively, the payoff of player $j$ in the stochastic game is defined by
\begin{align}
 \mathcal{S}_j &:= \lim_{T\rightarrow \infty} \frac{1}{T} \sum_{t=1}^T \mathbb{E} \left[ s_j\left( \bm{a}^{(t)}, \sigma^{(t)} \right) \right],
\end{align}
where $\mathbb{E}[\cdot]$ is the expectation with respect to strategies of all players and the transition function $T_\mathrm{E}$.
In this paper, for a given set $H$, we define the Kronecker delta
\begin{align}
 \delta_{h, h^\prime} &:= \left\{
  \begin{array}{ll}
    1 & \left( h=h^\prime \right) \\
    0 & \left( h\neq h^\prime \right)
  \end{array}
  \right.
\end{align}
for $h, h^\prime \in H$.
That is, we use the same notation of the Kronecker delta for the case $H=\Sigma$ and for the cases $H=A_j$ and so on.

As a special example of stochastic games, we introduce a periodic prisoner's dilemma game \cite{MMHet2025}.
This model is a two-state stochastic game in which the two stage games alternate deterministically.
The sets are defined as $\mathcal{N}= \{ 1, 2 \}$, $\Sigma= \{ \sigma_1, \sigma_2 \}$, $A_j= \left\{ C, D \right\}$ $(j=1, 2)$.
The transition function is defined as
\begin{align}
 T_\mathrm{E} \left( \sigma | \bm{a}, \sigma_1 \right) &= \delta_{\sigma, \sigma_2} \quad \left( \forall \bm{a} \in \mathcal{A} \right) \nonumber \\
 T_\mathrm{E} \left( \sigma | \bm{a}, \sigma_2 \right) &= \delta_{\sigma, \sigma_1} \quad \left( \forall \bm{a} \in \mathcal{A} \right).
\end{align}
The probability distribution of the initial state is defined as
\begin{align}
 P_\mathrm{E}^{(1)}\left( \sigma \right) &= \frac{1}{2} \delta_{\sigma, \sigma_1} + \frac{1}{2} \delta_{\sigma, \sigma_2}.
\end{align}
That is, the initial state is chosen randomly, and then two stage games are alternately played.
The one-shot payoffs are defined as in Tables \ref{table:PD1} and \ref{table:PD2}.
\begin{table}[tb]
  \centering
  \caption{Payoffs in state $\sigma_1$.}
  \begin{tabular}{|c|cc|} \hline
    & $C$ & $D$ \\ \hline
   $C$ & $R^{(1)}, R^{(2)}$ & $S^{(1)}, T^{(2)}$ \\
   $D$ & $T^{(1)}, S^{(2)}$ & $P^{(1)}, P^{(2)}$ \\ \hline
  \end{tabular}
  \label{table:PD1}
  
  \centering
  \caption{Payoffs in state $\sigma_2$.}
  \begin{tabular}{|c|cc|} \hline
    & $C$ & $D$ \\ \hline
   $C$ & $R^{(2)}, R^{(1)}$ & $S^{(2)}, T^{(1)}$ \\
   $D$ & $T^{(2)}, S^{(1)}$ & $P^{(2)}, P^{(1)}$ \\ \hline
  \end{tabular}
  \label{table:PD2}
\end{table}
We assume that $T^{(1)}+T^{(2)}>R^{(1)}+R^{(2)}>P^{(1)}+P^{(2)}>S^{(1)}+S^{(2)}$ and $2R^{(1)}+2R^{(2)}>T^{(1)}+T^{(2)}+S^{(1)}+S^{(2)}$ as in the standard prisoner's dilemma game \cite{PreDys2012}.
It should be noted that each game does not need to be the prisoner's dilemma game. 
When we introduce the notation $\bar{R} := \left( R^{(1)} + R^{(2)} \right)/2$, $\delta_R := \left( R^{(1)} - R^{(2)} \right)/2$, and so on, the assumption is rewritten as $\bar{T}>\bar{R}>\bar{P}>\bar{S}$ and $2\bar{R}>\bar{T}+\bar{S}$.
Therefore, there is no assumption on $\left( \delta_R, \delta_S, \delta_T, \delta_P \right)$.
By using this notation, the payoffs are rewritten as
\begin{align}
 s_1\left( C, C, \sigma_1 \right) &= \bar{R} + \delta_R \nonumber \\
 s_1\left( C, D, \sigma_1 \right) &= \bar{S} + \delta_S \nonumber \\
 s_1\left( D, C, \sigma_1 \right) &= \bar{T} + \delta_T \nonumber \\
 s_1\left( D, D, \sigma_1 \right) &= \bar{P} + \delta_P \nonumber \\
 s_1\left( C, C, \sigma_2 \right) &= \bar{R} - \delta_R \nonumber \\
 s_1\left( C, D, \sigma_2 \right) &= \bar{S} - \delta_S \nonumber \\
 s_1\left( D, C, \sigma_2 \right) &= \bar{T} - \delta_T \nonumber \\
 s_1\left( D, D, \sigma_2 \right) &= \bar{P} - \delta_P
\end{align}
and
\begin{align}
 s_2\left( C, C, \sigma_1 \right) &= \bar{R} - \delta_R \nonumber \\
 s_2\left( C, D, \sigma_1 \right) &= \bar{T} - \delta_T \nonumber \\
 s_2\left( D, C, \sigma_1 \right) &= \bar{S} - \delta_S \nonumber \\
 s_2\left( D, D, \sigma_1 \right) &= \bar{P} - \delta_P \nonumber \\
 s_2\left( C, C, \sigma_2 \right) &= \bar{R} + \delta_R \nonumber \\
 s_2\left( C, D, \sigma_2 \right) &= \bar{T} + \delta_T \nonumber \\
 s_2\left( D, C, \sigma_2 \right) &= \bar{S} + \delta_S \nonumber \\
 s_2\left( D, D, \sigma_2 \right) &= \bar{P} + \delta_P.
\end{align}
Although each stage game is not a symmetric game, the periodic prisoner's dilemma game is a symmetric game on average.
When $\delta_R=\delta_S=\delta_T=\delta_P=0$, the periodic prisoner's dilemma game is reduced to the repeated prisoner's dilemma game.

\section{Preliminaries}
\label{sec:preliminaries}
We introduce \emph{memory-one} strategies of player $j$ by $\left\{ T_j \left( a_j | \sigma, \bm{a}^\prime, \sigma^\prime \right) | a_j\in A_j, \sigma, \sigma^\prime \in \Sigma, \bm{a}^\prime \in \mathcal{A} \right\}$, where $T_j \left( a_j | \sigma, \bm{a}^\prime, \sigma^\prime \right)$ is the conditional probability to take $a_j$ when the state at the present round is $\sigma$, the previous action profile was $\bm{a}^\prime$, and the previous state was $\sigma^\prime$.
Generally, the joint probability distribution of the action profiles $\left\{ \bm{a}^{(t^\prime)} \right\}_{t^\prime=1}^t$ and the states $\left\{ \sigma^{(t^\prime)} \right\}_{t^\prime=1}^t$ is described as
\begin{align}
 & P\left( \left\{ \bm{a}^{(t^\prime)} \right\}_{t^\prime=1}^t, \left\{ \sigma^{(t^\prime)} \right\}_{t^\prime=1}^t \right) \nonumber \\
 &= \left[ \prod_{t^\prime=2}^t \left[ \prod_{k\in \mathcal{N}} T_k^{(t^\prime)} \left( a_k^{(t^\prime)} | \left\{ \bm{a}^{(t^{\prime\prime})} \right\}_{t^{\prime\prime}=1}^{t^\prime-1}, \left\{ \sigma^{(t^{\prime\prime})} \right\}_{t^{\prime\prime}=1}^{t^\prime} \right) \right] T_\mathrm{E} \left( \sigma^{(t^\prime)} | \bm{a}^{(t^\prime-1)}, \sigma^{(t^\prime-1)} \right) \right] \left[ \prod_{k\in \mathcal{N}} T_k^{(1)} \left( a_k^{(1)} | \sigma^{(1)} \right) \right] \nonumber \\
 &\quad \times P_\mathrm{E}^{(1)}\left( \sigma^{(1)} \right),
\end{align}
where $T_k^{(t^\prime)} \left( a_k^{(t^\prime)} | \left\{ \bm{a}^{(t^{\prime\prime})} \right\}_{t^{\prime\prime}=1}^{t^\prime-1}, \left\{ \sigma^{(t^{\prime\prime})} \right\}_{t^{\prime\prime}=1}^{t^\prime} \right)$ is the conditional probability to take $a_k^{(t^\prime)}$ in the $t^\prime$-th round when the history of the action profiles and the states is $\left\{ \bm{a}^{(t^{\prime\prime})} \right\}_{t^{\prime\prime}=1}^{t^\prime-1}$ and $\left\{ \sigma^{(t^{\prime\prime})} \right\}_{t^{\prime\prime}=1}^{t^\prime}$.
The joint probability distribution satisfies a recursion relation
\begin{align}
 & P\left( \left\{ \bm{a}^{(t^\prime)} \right\}_{t^\prime=1}^{t+1}, \left\{ \sigma^{(t^\prime)} \right\}_{t^\prime=1}^{t+1} \right) \nonumber \\
 &= \left[ \prod_{k\in \mathcal{N}} T_k^{(t+1)} \left( a_k^{(t+1)} | \left\{ \bm{a}^{(t^{\prime\prime})} \right\}_{t^{\prime\prime}=1}^{t}, \left\{ \sigma^{(t^{\prime\prime})} \right\}_{t^{\prime\prime}=1}^{t+1} \right) \right] T_\mathrm{E} \left( \sigma^{(t+1)} | \bm{a}^{(t)}, \sigma^{(t)} \right) P\left( \left\{ \bm{a}^{(t^\prime)} \right\}_{t^\prime=1}^t, \left\{ \sigma^{(t^\prime)} \right\}_{t^\prime=1}^t \right).
 \label{eq:recursion_joint}
\end{align}
We introduce the marginal distribution
\begin{align}
 P_t \left( \bm{a}^{(t)}, \sigma^{(t)} \right) &:= \sum_{\left\{ \bm{a}^{(t^\prime)} \right\}_{t^\prime=1}^{t-1}} \sum_{\left\{ \sigma^{(t^\prime)} \right\}_{t^\prime=1}^{t-1}} P\left( \left\{ \bm{a}^{(t^\prime)} \right\}_{t^\prime=1}^t, \left\{ \sigma^{(t^\prime)} \right\}_{t^\prime=1}^t \right),
\end{align}
and the time-averaged distribution
\begin{align}
 P^*\left( \bm{a}^\prime, \sigma^\prime \right) &:= \lim_{T\rightarrow \infty} \frac{1}{T} \sum_{t=1}^T P_t\left( \bm{a}^\prime, \sigma^\prime \right).
\end{align}
The following result is known as the Akin's lemma \cite{Aki2016} for stochastic games.
\begin{lemma}[\cite{MMHet2025}]
\label{lem:Akin}
If player $j$ uses a memory-one strategy $T_j$, it satisfies
\begin{align}
 0 &= \sum_{\bm{a}^\prime, \sigma^\prime} \left[ T_j \left( a_j | \sigma, \bm{a}^\prime, \sigma^\prime \right) T_\mathrm{E} \left( \sigma | \bm{a}^\prime, \sigma^\prime \right) - \delta_{a_j, a_j^\prime} \delta_{\sigma, \sigma^\prime} \right] P^*\left( \bm{a}^\prime, \sigma^\prime \right)
 \label{eq:Akin_SG}
\end{align}
for all $a_j$ and $\sigma$.
\end{lemma}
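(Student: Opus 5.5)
The plan is to compute the marginal probability that player $j$ takes $a_j$ in round $t+1$ while the state is $\sigma$, in two ways, and then time-average. First marginalize the recursion relation (\ref{eq:recursion_joint}). Sum both sides over $a_{-j}^{(t+1)}$ and over all earlier actions and states except $\bm{a}^{(t)},\sigma^{(t)}$, keeping $a_j^{(t+1)}=a_j$ and $\sigma^{(t+1)}=\sigma$. The sum over $a_{-j}^{(t+1)}$ of the opponents' conditional probabilities equals one. Because player $j$ is memory-one, his factor is $T_j(a_j|\sigma,\bm{a}^{(t)},\sigma^{(t)})$, which depends only on $(\sigma,\bm{a}^{(t)},\sigma^{(t)})$. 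The transition factor $T_\mathrm{E}(\sigma|\bm{a}^{(t)},\sigma^{(t)})$ depends on the same variables. Therefore the remaining sum over older history collapses to the marginal $P_t(\bm{a}^{(t)},\sigma^{(t)})$, and we get
\begin{align}
 \sum_{a_{-j}} P_{t+1}\left( a_j, a_{-j}, \sigma \right) &= \sum_{\bm{a}^\prime,\sigma^\prime} T_j\left( a_j|\sigma,\bm{a}^\prime,\sigma^\prime \right) T_\mathrm{E}\left( \sigma|\bm{a}^\prime,\sigma^\prime \right) P_t\left( \bm{a}^\prime,\sigma^\prime \right).
\end{align}

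Second, rewrite the left-hand side trivially as $\sum_{\bm{a}^\prime,\sigma^\prime}\delta_{a_j,a_j^\prime}\delta_{\sigma,\sigma^\prime}P_{t+1}(\bm{a}^\prime,\sigma^\prime)$. Subtracting the identity of the first step and then inserting $\pm P_t$ gives
\begin{align}
 \sum_{\bm{a}^\prime,\sigma^\prime}\delta_{a_j,a_j^\prime}\delta_{\sigma,\sigma^\prime}\left[ P_{t+1}-P_t \right]\left( \bm{a}^\prime,\sigma^\prime \right) &= \sum_{\bm{a}^\prime,\sigma^\prime}\left[ T_j T_\mathrm{E} - \delta_{a_j,a_j^\prime}\delta_{\sigma,\sigma^\prime} \right] P_t\left( \bm{a}^\prime,\sigma^\prime \right),
\end{align}
with all arguments suppressed.

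Third, sum this over $t=1,\dots,T$ and divide by $T$. The left-hand side telescopes to $\frac{1}{T}\sum_{\bm{a}^\prime,\sigma^\prime}\delta\delta\,[P_{T+1}-P_1]$. Probabilities lie in $[0,1]$ and the sum is finite, so this is bounded in absolute value by $2|\mathcal{A}||\Sigma|/T\to0$. Passing to the limit on the right-hand side gives $P^*$ and hence Eq.~(\ref{eq:Akin_SG}).

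The main subtle point is existence of the limit defining $P^*$. For arbitrary opponent strategies the Cesàro averages need not converge. Following the paper's convention, I would assume the limit exists. Otherwise, I would pass to a convergent subsequence $T_n$, which exists by compactness of the simplex, and interpret $P^*$ as that subsequential limit; the identity holds for any such limit point. The marginalization step is routine but requires care that only player $j$'s strategy is memory-one: the opponents' factors are eliminated purely by normalization, not by any memory assumption.
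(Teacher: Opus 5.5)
Your proposal is correct and follows essentially the same route as the paper's proof. Like the paper, you marginalize the recursion (\ref{eq:recursion_joint}) over $a_{-j}^{(t+1)}$ and the earlier history, eliminate the opponents by normalization, use only player $j$'s memory-one property to reduce the right-hand side to $\sum_{\bm{a}^\prime,\sigma^\prime} T_j T_\mathrm{E} P_t$, and then take the Ces\`aro average. The one difference is that you spell out the telescoping bound showing that the averages of $P_{t+1}$ and $P_t$ have the same limit, and you flag whether $P^*$ exists; the paper uses both points implicitly when it applies $\lim_{T\rightarrow\infty}\frac{1}{T}\sum_{t=1}^T$ to both sides.
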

\begin{proof}
When we consider $\sum_{a_{-j}^{(t+1)}} \sum_{\left\{ \bm{a}^{(t^\prime)} \right\}_{t^\prime=1}^{t}} \sum_{\left\{ \sigma^{(t^\prime)} \right\}_{t^\prime=1}^{t}}$ of Eq. (\ref{eq:recursion_joint}), we obtain
\begin{align}
 & \sum_{\bm{a}^\prime, \sigma^\prime} \delta_{a_j^\prime, a_j^{(t+1)}} \delta_{\sigma^\prime, \sigma^{(t+1)}} P_{t+1} \left( \bm{a}^\prime, \sigma^\prime \right) \nonumber \\
 &= \sum_{\left\{ \bm{a}^{(t^\prime)} \right\}_{t^\prime=1}^{t}} \sum_{\left\{ \sigma^{(t^\prime)} \right\}_{t^\prime=1}^{t}} T_j^{(t+1)} \left( a_j^{(t+1)} | \left\{ \bm{a}^{(t^{\prime\prime})} \right\}_{t^{\prime\prime}=1}^{t}, \left\{ \sigma^{(t^{\prime\prime})} \right\}_{t^{\prime\prime}=1}^{t+1} \right) T_\mathrm{E} \left( \sigma^{(t+1)} | \bm{a}^{(t)}, \sigma^{(t)} \right) \nonumber \\
 &\quad \times P\left( \left\{ \bm{a}^{(t^\prime)} \right\}_{t^\prime=1}^t, \left\{ \sigma^{(t^\prime)} \right\}_{t^\prime=1}^t \right)
\end{align}
If player $j$ uses a memory-one strategy, we obtain
\begin{align}
 \sum_{\bm{a}^\prime, \sigma^\prime} \delta_{a_j^\prime, a_j^{(t+1)}} \delta_{\sigma^\prime, \sigma^{(t+1)}} P_{t+1} \left( \bm{a}^\prime, \sigma^\prime \right) &= \sum_{\bm{a}^\prime, \sigma^\prime} T_j \left( a_j^{(t+1)} | \sigma^{(t+1)}, \bm{a}^\prime, \sigma^\prime \right) T_\mathrm{E} \left( \sigma^{(t+1)} | \bm{a}^\prime, \sigma^\prime \right) P_t\left( \bm{a}^\prime, \sigma^\prime \right).
\end{align}
By renaming $a_j^{(t+1)} \rightarrow a_j$ and $\sigma^{(t+1)} \rightarrow \sigma$, and taking $\lim_{T\rightarrow \infty} \frac{1}{T} \sum_{t=1}^T$ of the both sides, we obtain
\begin{align}
 \sum_{\bm{a}^\prime, \sigma^\prime} \delta_{a_j^\prime, a_j} \delta_{\sigma^\prime, \sigma} P^* \left( \bm{a}^\prime, \sigma^\prime \right) &= \sum_{\bm{a}^\prime, \sigma^\prime} T_j \left( a_j | \sigma, \bm{a}^\prime, \sigma^\prime \right) T_\mathrm{E} \left( \sigma | \bm{a}^\prime, \sigma^\prime \right) P^*\left( \bm{a}^\prime, \sigma^\prime \right),
\end{align}
which is equivalent to Eq. (\ref{eq:Akin_SG}).
\end{proof}
Below we define
\begin{align}
 \hat{T}_j \left( a_j, \sigma | \bm{a}^\prime, \sigma^\prime \right) &:= T_j \left( a_j | \sigma, \bm{a}^\prime, \sigma^\prime \right) T_\mathrm{E} \left( \sigma | \bm{a}^\prime, \sigma^\prime \right) - \delta_{a_j, a_j^\prime} \delta_{\sigma, \sigma^\prime}
 \label{eq:PD_SG}
\end{align}
and call them the \emph{Press-Dyson vectors}.
We remark that the Press-Dyson vectors satisfy
\begin{align}
 \sum_{a_j, \sigma} \hat{T}_j \left( a_j, \sigma | \bm{a}^\prime, \sigma^\prime \right) &= 0
 \label{eq:PD_normalization}
\end{align}
for all $\bm{a}^\prime$ and $\sigma^\prime$.

A partial version of the Akin's lemma is also obtained from Lemma \ref{lem:Akin}.
\begin{lemma}
\label{lem:partialAkin}
If player $j$ uses a memory-one strategy $T_j$ which does not depend on the present state $\sigma$, it satisfies
\begin{align}
 0 &= \sum_{\bm{a}^\prime, \sigma^\prime} \left[ T_j \left( a_j | \bm{a}^\prime, \sigma^\prime \right) - \delta_{a_j, a_j^\prime} \right] P^*\left( \bm{a}^\prime, \sigma^\prime \right)
 \label{eq:partialAkin}
\end{align}
for all $a_j$.
\end{lemma}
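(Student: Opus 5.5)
The plan is to sum Eq. (\ref{eq:Akin_SG}) of Lemma \ref{lem:Akin} over the present state $\sigma\in\Sigma$, keeping $a_j$ fixed. Since $T_j$ does not depend on the present state, we write $T_j\left( a_j | \sigma, \bm{a}^\prime, \sigma^\prime \right) = T_j\left( a_j | \bm{a}^\prime, \sigma^\prime \right)$. This factor then comes out of the sum over $\sigma$.

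Concretely, summing the bracket in Eq. (\ref{eq:Akin_SG}) over $\sigma$ gives
\begin{align}
 \sum_{\sigma} \left[ T_j \left( a_j | \bm{a}^\prime, \sigma^\prime \right) T_\mathrm{E} \left( \sigma | \bm{a}^\prime, \sigma^\prime \right) - \delta_{a_j, a_j^\prime} \delta_{\sigma, \sigma^\prime} \right] = T_j \left( a_j | \bm{a}^\prime, \sigma^\prime \right) \sum_{\sigma} T_\mathrm{E} \left( \sigma | \bm{a}^\prime, \sigma^\prime \right) - \delta_{a_j, a_j^\prime}.
\end{align}
Here I use $\sum_\sigma \delta_{\sigma,\sigma^\prime}=1$. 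The transition function takes values in the probability simplex $\Delta(\Sigma)$, so $\sum_\sigma T_\mathrm{E}\left( \sigma | \bm{a}^\prime, \sigma^\prime \right)=1$ for every $\bm{a}^\prime$ and $\sigma^\prime$. The bracket therefore reduces to $T_j\left( a_j | \bm{a}^\prime, \sigma^\prime \right) - \delta_{a_j, a_j^\prime}$.

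Next I interchange the finite sums over $\sigma$ and over $(\bm{a}^\prime,\sigma^\prime)$, and multiply by $P^*\left( \bm{a}^\prime, \sigma^\prime \right)$. The left-hand side stays $0$, because it is a finite sum of zeros, one for each $\sigma$. This yields Eq. (\ref{eq:partialAkin}) for every $a_j$.

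There is no real obstacle. The only points to check are that $P^*$ is well defined, which is inherited from Lemma \ref{lem:Akin}, and that the stochasticity of $T_\mathrm{E}$ is used correctly. This argument holds for general stochastic games, not only for the periodic one.
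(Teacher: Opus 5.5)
Your proposal is correct and is the same argument as the paper's: sum the identity of Lemma~\ref{lem:Akin} over the present state $\sigma$ and pull out the state-independent factor $T_j(a_j|\bm{a}^\prime,\sigma^\prime)$. You also spell out the two normalizations the paper leaves implicit, namely $\sum_\sigma T_\mathrm{E}(\sigma|\bm{a}^\prime,\sigma^\prime)=1$ and $\sum_\sigma \delta_{\sigma,\sigma^\prime}=1$.
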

\begin{proof}
If $T_j$ does not depend on the present state $\sigma$, Lemma \ref{lem:Akin} becomes
\begin{align}
 0 &= \sum_{\bm{a}^\prime, \sigma^\prime} \left[ T_j \left( a_j | \bm{a}^\prime, \sigma^\prime \right) T_\mathrm{E} \left( \sigma | \bm{a}^\prime, \sigma^\prime \right) - \delta_{a_j, a_j^\prime} \delta_{\sigma, \sigma^\prime} \right] P^*\left( \bm{a}^\prime, \sigma^\prime \right).
\end{align}
By taking $\sum_\sigma$ of the both sides, we obtain Eq. (\ref{eq:partialAkin}).
\end{proof}
Similarly as above, we define
\begin{align}
 \hat{T}_j \left( a_j | \bm{a}^\prime, \sigma^\prime \right) &:= T_j \left( a_j | \bm{a}^\prime, \sigma^\prime \right) - \delta_{a_j, a_j^\prime}
\end{align}
and call them the \emph{partial Press-Dyson vectors}.

We now introduce zero-determinant strategies in stochastic games.
We write $B\left( \bm{a}, \sigma \right):= \sum_{k\in \mathcal{N}} \alpha_k s_k \left( \bm{a}, \sigma \right) + \alpha_0$ with some coefficients $\{ \alpha_k \}$.
\begin{definition}[\cite{MMHet2025}]
\label{def:ZDS}
A memory-one strategy of player $j$ is a \emph{zero-determinant (ZD) strategy} controlling $B$ if it satisfies
\begin{align}
 \sum_{a_j, \sigma} c_{a_j, \sigma} \hat{T}_j \left( a_j, \sigma | \bm{a}^\prime, \sigma^\prime \right) &= B\left( \bm{a}^\prime, \sigma^\prime \right)
 \label{eq:ZDS}
\end{align}
with some coefficients $\left\{ c_{a_j, \sigma} \right\}$ and $B$ is not identically zero.
\end{definition}
As a direct consequence of Lemma \ref{lem:Akin}, the ZD strategy (\ref{eq:ZDS}) unilaterally enforces
\begin{align}
 0 &= \left\langle B \right\rangle^*,
\end{align}
where $\left\langle \cdot \right\rangle^*$ represents the expectation with respect to $P^*$.
It should be noted that $\mathcal{S}_k=\left\langle s_k \right\rangle^*$ for all $k\in \mathcal{N}$.

We can also construct ZD strategies by using Lemma \ref{lem:partialAkin}.
\begin{definition}
\label{def:partialZDS}
A memory-one strategy of player $j$ is a \emph{partial ZD strategy} controlling $B$ if it satisfies
\begin{align}
 \sum_{a_j} c_{a_j} \hat{T}_j \left( a_j | \bm{a}^\prime, \sigma^\prime \right) &= B\left( \bm{a}^\prime, \sigma^\prime \right)
 \label{eq:partialZDS}
\end{align}
with some coefficients $\left\{ c_{a_j} \right\}$ and $B$ is not identically zero.
\end{definition}
A partial ZD strategy (\ref{eq:partialZDS}) also unilaterally enforces
\begin{align}
 0 &= \left\langle B \right\rangle^*
\end{align}
as a result of Lemma \ref{lem:partialAkin}.

We also collectively write $\check{a}_j:=(a_j, \sigma)$.
A necessary condition for the existence of ZD strategies is given as follows.
\begin{proposition}
\label{prop:necessary_ZDS}
A ZD strategy of player $j$ controlling $B$ exists only if there are $\overline{\check{a}}_j, \underline{\check{a}}_j \in A_j\times \Sigma$ such that
\begin{align}
 B\left( \overline{\check{a}}_j, a_{-j} \right) &\geq 0 \quad \left( \forall a_{-j} \in A_{-j} \right) \nonumber \\
 B\left( \underline{\check{a}}_j, a_{-j} \right) &\leq 0 \quad \left( \forall a_{-j} \in A_{-j} \right).
 \label{eq:autocratic}
\end{align}
\end{proposition}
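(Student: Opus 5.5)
The plan is to adapt the standard repeated-game argument (as in Ref.~\cite{Ued2022b}): extremize the coefficients $c_{\check{a}_j}$ and use the sign structure of the Press-Dyson vectors, which follows from the fact that the $T_j$ are probabilities. Suppose $T_j$ is a ZD strategy controlling $B$, so that Eq.~(\ref{eq:ZDS}) holds with coefficients $\{c_{\check{a}_j}\}$. We may write $B(\check{a}_j^\prime, a_{-j}^\prime) := B(\bm{a}^\prime,\sigma^\prime)$ with $\check{a}_j^\prime=(a_j^\prime,\sigma^\prime)$. By the normalization (\ref{eq:PD_normalization}), replacing $c_{\check{a}_j}$ by $c_{\check{a}_j}-\kappa$ for any constant $\kappa$ leaves the left-hand side of Eq.~(\ref{eq:ZDS}) unchanged. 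So we are free to shift the coefficients.

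Choose $\overline{\check{a}}_j \in \arg\min_{\check{a}_j} c_{\check{a}_j}$ and $\underline{\check{a}}_j \in \arg\max_{\check{a}_j} c_{\check{a}_j}$. These exist because $A_j\times\Sigma$ is finite. Now evaluate Eq.~(\ref{eq:ZDS}) at $(\bm{a}^\prime,\sigma^\prime)$ with $\check{a}_j^\prime=\overline{\check{a}}_j$ and $a_{-j}^\prime$ arbitrary. From the definition (\ref{eq:PD_SG}),
\begin{align}
 \hat{T}_j(\check{a}_j | \bm{a}^\prime,\sigma^\prime) = T_j(a_j|\sigma,\bm{a}^\prime,\sigma^\prime)T_\mathrm{E}(\sigma|\bm{a}^\prime,\sigma^\prime) - \delta_{\check{a}_j,\overline{\check{a}}_j}.
\end{align}
This quantity is nonnegative for $\check{a}_j\neq\overline{\check{a}}_j$ and nonpositive for $\check{a}_j=\overline{\check{a}}_j$.

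Shift the coefficients by $\kappa=c_{\overline{\check{a}}_j}$. The term $\check{a}_j=\overline{\check{a}}_j$ then vanishes, and every remaining term is a product of $c_{\check{a}_j}-c_{\overline{\check{a}}_j}\geq 0$ with a nonnegative weight. Hence $B(\overline{\check{a}}_j,a_{-j})\geq 0$ for all $a_{-j}$.

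The argument for the other inequality is symmetric. Shift by $\kappa=c_{\underline{\check{a}}_j}$ and evaluate at $\check{a}_j^\prime=\underline{\check{a}}_j$. Every surviving term is then a product of $c_{\check{a}_j}-c_{\underline{\check{a}}_j}\leq 0$ with a nonnegative weight, so $B(\underline{\check{a}}_j,a_{-j})\leq 0$ for all $a_{-j}$.

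The only point needing care is that $\check{a}_j^\prime$ ranges over all of $A_j\times\Sigma$ when we plug in previous profiles. This holds because Eq.~(\ref{eq:ZDS}) is required for every pair $(\bm{a}^\prime,\sigma^\prime)$, even unreachable ones. It also does not matter that $T_\mathrm{E}$ may vanish, since vanishing weights only make the relevant terms zero. I expect no real obstacle beyond this bookkeeping. The nontrivial condition that $B$ is not identically zero is not needed for this direction.
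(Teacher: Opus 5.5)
Your proposal is correct and follows the paper's proof essentially step by step. Like the paper, you shift the coefficients by their minimum and maximum using the normalization of the Press-Dyson vectors, evaluate at the corresponding argmin/argmax previous state $\check{a}_j^\prime$, and use the sign pattern of $\hat{T}_j$ to get $\overline{\check{a}}_j=\arg\min c_{\check{a}_j}$ and $\underline{\check{a}}_j=\arg\max c_{\check{a}_j}$.
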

\begin{proof}
If a ZD strategy of player $j$ controlling $B$ exists, it satisfies Eq. (\ref{eq:ZDS}).
We introduce the notations
\begin{align}
 c_\mathrm{max} &:= \max_{\check{a}_j} c_{\check{a}_j} \nonumber \\
 c_\mathrm{min} &:= \min_{\check{a}_j} c_{\check{a}_j} 
\end{align}
and
\begin{align}
 \check{a}_{j, \mathrm{max}} &:= \arg \max_{\check{a}_j} c_{\check{a}_j} \nonumber \\
 \check{a}_{j, \mathrm{min}} &:= \arg \min_{\check{a}_j} c_{\check{a}_j}.
\end{align}
Due to Eq. (\ref{eq:PD_normalization}), Eq. (\ref{eq:ZDS}) is rewritten as
\begin{align}
 B\left( \bm{a}^\prime, \sigma^\prime \right) &= \sum_{\check{a}_j} \left( c_{\check{a}_j} - c_\mathrm{max} \right) \hat{T}_j \left( a_j, \sigma | \bm{a}^\prime, \sigma^\prime \right) \nonumber \\
 &= \sum_{\check{a}_j} \left( c_{\check{a}_j} - c_\mathrm{min} \right) \hat{T}_j \left( a_j, \sigma | \bm{a}^\prime, \sigma^\prime \right).
\end{align}
Then, because $\hat{T}_j \left( a_j, \sigma | \bm{a}^\prime, \sigma^\prime \right) = T_j \left( a_j | \sigma, \bm{a}^\prime, \sigma^\prime \right) T_\mathrm{E} \left( \sigma | \bm{a}^\prime, \sigma^\prime \right) \geq 0$ for $a_j\neq a_j^\prime$ or $\sigma\neq \sigma^\prime$, we find
\begin{align}
 B\left( \check{a}_{j, \mathrm{max}}, a_{-j}^\prime \right) &= \sum_{\check{a}_j} \left( c_{\check{a}_j} - c_\mathrm{max} \right) \hat{T}_j \left( a_j, \sigma | \check{a}_{j, \mathrm{max}}, a_{-j}^\prime \right) \nonumber \\
 &= \sum_{\check{a}_j \neq \check{a}_{j, \mathrm{max}}} \left( c_{\check{a}_j} - c_\mathrm{max} \right) \hat{T}_j \left( a_j, \sigma | \check{a}_{j, \mathrm{max}}, a_{-j}^\prime \right) \leq 0
\end{align}
and 
\begin{align}
 B\left( \check{a}_{j, \mathrm{min}}, a_{-j}^\prime \right) &= \sum_{\check{a}_j} \left( c_{\check{a}_j} - c_\mathrm{min} \right) \hat{T}_j \left( a_j, \sigma | \check{a}_{j, \mathrm{min}}, a_{-j}^\prime \right) \nonumber \\
 &= \sum_{\check{a}_j \neq \check{a}_{j, \mathrm{min}}} \left( c_{\check{a}_j} - c_\mathrm{min} \right) \hat{T}_j \left( a_j, \sigma | \check{a}_{j, \mathrm{min}}, a_{-j}^\prime \right) \geq 0.
\end{align}
Therefore, we can identify $\overline{\check{a}}_j=\check{a}_{j, \mathrm{min}}$ and $\underline{\check{a}}_j=\check{a}_{j, \mathrm{max}}$.
\end{proof}
It should be noted that the condition (\ref{eq:autocratic}) can be rewritten as \cite{Ued2026}
\begin{align}
 \max_{\check{a}_j} \min_{a_{-j}} B\left( \check{a}_j, a_{-j} \right) &\geq 0 \nonumber \\
 \min_{\check{a}_j} \max_{a_{-j}} B\left( \check{a}_j, a_{-j} \right) &\leq 0.
 \label{eq:autocratic_minimax}
\end{align}
Although the condition (\ref{eq:autocratic}) is a necessary condition for the existence of ZD strategies, it is not necessarily a sufficient condition.

In contrast, a necessary and sufficient condition for the existence of partial ZD strategies is given as follows.
\begin{proposition}
\label{prop:existence_partialZDS}
A partial ZD strategy of player $j$ controlling $B$ exists if and only if there are $\overline{a}_j, \underline{a}_j \in A_j$ such that
\begin{align}
 B\left( \overline{a}_j, a_{-j}, \sigma \right) &\geq 0 \quad \left( \forall a_{-j} \in A_{-j}, \forall \sigma \in \Sigma \right) \nonumber \\
 B\left( \underline{a}_j, a_{-j}, \sigma \right) &\leq 0 \quad \left( \forall a_{-j} \in A_{-j}, \forall \sigma \in \Sigma \right).
 \label{eq:autocratic_partial}
\end{align}
and $B$ is not identically zero.
\end{proposition}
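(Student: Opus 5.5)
The proof splits into two directions. Necessity follows the proof of Proposition \ref{prop:necessary_ZDS}. Sufficiency is an explicit construction modeled on the repeated-game existence proof of \cite{Ued2022b}.

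\emph{Necessity.} Suppose a partial ZD strategy controlling $B$ exists, so Eq. (\ref{eq:partialZDS}) holds for some $\{c_{a_j}\}$. Since $\sum_{a_j} \hat{T}_j(a_j|\bm{a}^\prime,\sigma^\prime)=0$ for every $(\bm{a}^\prime,\sigma^\prime)$, we may subtract $c_\mathrm{max}:=\max_{a_j} c_{a_j}$ or $c_\mathrm{min}:=\min_{a_j} c_{a_j}$ from all coefficients without changing the left-hand side. Let $a_{j,\mathrm{max}}$ and $a_{j,\mathrm{min}}$ be the maximizers and minimizers, and evaluate at $\bm{a}^\prime=(a_{j,\mathrm{max}},a_{-j}^\prime)$ with arbitrary $\sigma^\prime$. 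The diagonal term drops out. Every remaining $\hat{T}_j(a_j|\bm{a}^\prime,\sigma^\prime)=T_j(a_j|\bm{a}^\prime,\sigma^\prime)$ with $a_j\neq a_j^\prime$ is nonnegative, and its coefficient $c_{a_j}-c_\mathrm{max}$ is nonpositive. Hence $B(a_{j,\mathrm{max}},a_{-j}^\prime,\sigma^\prime)\le 0$ for all $a_{-j}^\prime$ and $\sigma^\prime$. The same argument with $c_\mathrm{min}$ gives $B(a_{j,\mathrm{min}},a_{-j}^\prime,\sigma^\prime)\ge 0$. We therefore take $\underline{a}_j=a_{j,\mathrm{max}}$ and $\overline{a}_j=a_{j,\mathrm{min}}$. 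Non-vanishing of $B$ is part of Definition \ref{def:partialZDS}.

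\emph{Sufficiency.} Assume Eq. (\ref{eq:autocratic_partial}) holds and $B\not\equiv 0$.

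\textbf{Case 1: $\overline{a}_j=\underline{a}_j=:a^*$.} Then $B(a^*,\cdot,\cdot)\equiv 0$. Set $c_{a^*}=0$ and $c_{a}=-W$ for all $a\neq a^*$, with $W>0$. Define
\begin{align}
 T_j(a|\bm{a}^\prime,\sigma^\prime)=\delta_{a,a_j^\prime}-\frac{B(\bm{a}^\prime,\sigma^\prime)}{W(|A_j|-1)} \quad (a\neq a^*),
\end{align}
and let $T_j(a^*|\bm{a}^\prime,\sigma^\prime)$ be fixed by normalization. Then
\begin{align}
 \sum_{a} c_a \hat{T}_j(a|\bm{a}^\prime,\sigma^\prime)=-W\sum_{a\neq a^*}\hat{T}_j(a|\bm{a}^\prime,\sigma^\prime)=B(\bm{a}^\prime,\sigma^\prime),
\end{align}
as required. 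We must check that these are probabilities. When $a_j^\prime=a^*$, $B=0$ and $T_j$ is simply ``repeat $a^*$''. When $a_j^\prime\neq a^*$, the entries are small perturbations of a point mass. For the $\{C,D\}$ action sets of the present game, $|A_j|=2$, and the condition reduces to $0\le B(a,\cdot,\cdot)/W\le 1$ with the correct sign. This sign is the one point to verify carefully.

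\textbf{Case 2: $\overline{a}_j\neq\underline{a}_j$.} Choose $c_{\overline{a}_j}=-W$, $c_{\underline{a}_j}=W$, and $c_a=0$ for all other $a$. Take the ansatz
\begin{align}
 T_j(\underline{a}_j|\bm{a}^\prime,\sigma^\prime)-\delta_{\underline{a}_j,a_j^\prime}&=\lambda(\bm{a}^\prime,\sigma^\prime)\, B/W, \nonumber \\
 T_j(\overline{a}_j|\bm{a}^\prime,\sigma^\prime)-\delta_{\overline{a}_j,a_j^\prime}&=-(1-\lambda(\bm{a}^\prime,\sigma^\prime))\, B/W,
\end{align}
with $\lambda\in[0,1]$ chosen according to the sign of $B$ and the value of $a_j^\prime$. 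The other actions absorb the remaining probability.

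When $a_j^\prime=\underline{a}_j$ we have $B\le 0$. We put all of the change on $\underline{a}_j$ decreasing and $\overline{a}_j$ increasing. This requires $|B|/W\le 1$, so $W\ge\max|B|$ suffices. The case $a_j^\prime=\overline{a}_j$ with $B\ge 0$ is symmetric. For any other $a_j^\prime$, $B$ has no sign restriction. There we move mass from $a_j^\prime$ to $\underline{a}_j$ if $B>0$ and to $\overline{a}_j$ if $B<0$, which again needs only $|B|/W\le 1$.

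In every subcase we then verify that $\sum_a c_a\hat{T}_j=B$.

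\textbf{Main obstacle.} The only delicate part is the bookkeeping of this last construction: for each $(a_j^\prime,\sigma^\prime)$ the defined $T_j$ must be a genuine probability vector. I would organize this by fixing $W:=\max_{\bm{a},\sigma}|B(\bm{a},\sigma)|>0$, which is positive because $B\not\equiv 0$, and checking the cases one at a time. The state $\sigma^\prime$ plays no special role in the construction; it enters only through the value of $B$. This is why condition (\ref{eq:autocratic_partial}) must hold uniformly in $\sigma$, and why it is sufficient here, unlike condition (\ref{eq:autocratic}) in Proposition \ref{prop:necessary_ZDS}.
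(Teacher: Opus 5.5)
Your necessity argument and your Case 2 construction are correct, and they follow the paper's route: necessity is handled exactly as in Proposition \ref{prop:necessary_ZDS}, and sufficiency uses the repeated-game construction of \cite{Ued2022b} with $\sigma$ treated as an extra opponent's action. In Case 2 your ansatz closes as follows. When $a_j^\prime\in\{\overline{a}_j,\underline{a}_j\}$, take $\lambda=1/2$; this is forced if $|A_j|=2$. Otherwise take $\lambda\in\{0,1\}$ according to the sign of $B$. Then $W=\max|B|$ works.

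The gap is Case 1, and it is not a sign you merely left unchecked: the required sign is not available. If $\overline{a}_j=\underline{a}_j=a^*$, the hypothesis gives only $B(a^*,\cdot,\cdot)\equiv 0$ and says nothing about $B$ at other actions. Your definition yields $T_j(a^*|\bm{a}^\prime,\sigma^\prime)=B(\bm{a}^\prime,\sigma^\prime)/W$ for $a_j^\prime\neq a^*$, which needs $B\geq 0$ there. For $|A_j|\geq 3$, your entries $-B/(W(|A_j|-1))$ also need $B\leq 0$.

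No other construction rescues this case. Take $A_j=\{C,D\}$, $B(C,\cdot,\cdot)\equiv 0$, and let $B(D,\cdot,\cdot)$ take both strictly positive and strictly negative values. The hypothesis holds with $\overline{a}_j=\underline{a}_j=C$, and $B\not\equiv 0$. Yet for any memory-one $T_j$ and any coefficients,
\[
 \sum_{a_j} c_{a_j}\hat{T}_j\left( a_j | D, a_{-j}^\prime, \sigma^\prime \right) = \left( c_C - c_D \right) T_j\left( C | D, a_{-j}^\prime, \sigma^\prime \right).
\]
The right-hand side has a single sign as $(a_{-j}^\prime,\sigma^\prime)$ varies, so it cannot equal $B(D,a_{-j}^\prime,\sigma^\prime)$. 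Hence no partial ZD strategy exists in this example, even though the literal statement's hypothesis holds.

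The statement therefore has to be read with $\overline{a}_j\neq\underline{a}_j$, i.e., the ``two different actions'' of \cite{Ued2022b}, on which the paper's proof relies. Your necessity proof already delivers this distinctness: $B\not\equiv 0$ forces $c_\mathrm{max}>c_\mathrm{min}$, so $a_{j,\mathrm{max}}\neq a_{j,\mathrm{min}}$. With distinctness in the hypothesis, delete Case 1 and your proof is complete. If $a^*$ satisfies both inequalities but some distinct pair also exists, Case 2 applies to that pair.

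This degenerate case does not affect the paper's use of the proposition in Theorem \ref{thm:fair_partial}. There, $B(C,\cdot,\cdot)\equiv 0$ or $B(D,\cdot,\cdot)\equiv 0$ would force $\bar{T}=\bar{S}$.
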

\begin{proof}
For the case of partial ZD strategies, player $j$ uses a memory-one strategy $T_j$ which does not depend on the present environmental state.
Therefore, from the viewpoint of player $j$, an environment can be regarded as a player who takes action $\sigma$ simultaneously with player $j$.
This virtual change does not affect on the definition of partial ZD strategies (\ref{eq:partialZDS}), because they are fully determined by the strategy of player $j$ and the definition of $B$, both of which depend only on the ``previous action profile'' $(\bm{a}^\prime, \sigma^\prime)$.
Then, we can apply the results for repeated games \cite{Ued2022b}.
Particularly, the necessity is proved similarly as Proposition \ref{prop:necessary_ZDS}.
The sufficiency can be proved by explicitly constructing a ZD strategy \cite{Ued2022b}.
\end{proof}
Similarly as above, the condition (\ref{eq:autocratic_partial}) can be rewritten as
\begin{align}
 \max_{a_j} \min_{a_{-j}, \sigma} B\left( \bm{a}, \sigma \right) &\geq 0 \nonumber \\
 \min_{a_j} \max_{a_{-j}, \sigma} B\left( \bm{a}, \sigma \right) &\leq 0.
 \label{eq:autocratic_partial_minimax}
\end{align}
Because partial ZD strategies are ZD strategies, Proposition \ref{prop:existence_partialZDS} gives a sufficient condition for the existence of ZD strategies.
Although we consider a specific transition probability $T_\mathrm{E}$ in this paper, partial ZD strategies can be used against arbitrary $T_\mathrm{E}$.
ZD strategies of an environment \cite{DRWW2021,ZZDR2024} can also be constructed similarly as partial ZD strategies.

\section{Results}
\label{sec:results}
Here, we investigate the existence of \emph{fair} ZD strategies, which are ZD strategies controlling $s_1-s_2$, in the periodic prisoner's dilemma game.
It is convenient to introduce the notations
\begin{align}
 \bm{s}_j &:= \left(
\begin{array}{c}
s_j(C,C,\sigma_1) \\
s_j(C,D,\sigma_1) \\
s_j(D,C,\sigma_1) \\
s_j(D,D,\sigma_1) \\
s_j(C,C,\sigma_2) \\
s_j(C,D,\sigma_2) \\
s_j(D,C,\sigma_2) \\
s_j(D,D,\sigma_2)
\end{array}
\right)
\end{align}
and
\begin{align}
 \bm{\hat{T}}_j\left( a_j, \sigma \right) &:= \left(
\begin{array}{c}
\hat{T}_j(a_j, \sigma | C,C,\sigma_1) \\
\hat{T}_j(a_j, \sigma | C,D,\sigma_1) \\
\hat{T}_j(a_j, \sigma | D,C,\sigma_1) \\
\hat{T}_j(a_j, \sigma | D,D,\sigma_1) \\
\hat{T}_j(a_j, \sigma | C,C,\sigma_2) \\
\hat{T}_j(a_j, \sigma | C,D,\sigma_2) \\
\hat{T}_j(a_j, \sigma | D,C,\sigma_2) \\
\hat{T}_j(a_j, \sigma | D,D,\sigma_2)
\end{array}
\right).
\end{align}
We find that
\begin{align}
 \bm{B} &:= \bm{s}_1 - \bm{s}_2 = \left(
\begin{array}{c}
2\delta_R \\
\bar{S}-\bar{T}+\delta_S+\delta_T \\
\bar{T}-\bar{S}+\delta_S+\delta_T \\
2\delta_P \\
-2\delta_R \\
\bar{S}-\bar{T}-\delta_S-\delta_T \\
\bar{T}-\bar{S}-\delta_S-\delta_T \\
-2\delta_P
\end{array}
\right)
\end{align}
and
\begin{align}
 \bm{\hat{T}}_1\left( C, \sigma_1 \right) = \left(
\begin{array}{c}
-1 \\
-1 \\
0 \\
0 \\
T_1(C | \sigma_1,  C,C,\sigma_2) \\
T_1(C | \sigma_1,  C,D,\sigma_2) \\
T_1(C | \sigma_1,  D,C,\sigma_2) \\
T_1(C | \sigma_1,  D,D,\sigma_2)
\end{array}
\right), \quad
 \bm{\hat{T}}_1\left( D, \sigma_1 \right) = \left(
\begin{array}{c}
0 \\
0 \\
-1 \\
-1 \\
T_1(D | \sigma_1,  C,C,\sigma_2) \\
T_1(D | \sigma_1,  C,D,\sigma_2) \\
T_1(D | \sigma_1,  D,C,\sigma_2) \\
T_1(D | \sigma_1,  D,D,\sigma_2)
\end{array}
\right), \nonumber \\
\bm{\hat{T}}_1\left( C, \sigma_2 \right) = \left(
\begin{array}{c}
T_1(C | \sigma_2,  C,C,\sigma_1) \\
T_1(C | \sigma_2,  C,D,\sigma_1) \\
T_1(C | \sigma_2,  D,C,\sigma_1) \\
T_1(C | \sigma_2,  D,D,\sigma_1) \\
-1 \\
-1 \\
0 \\
0
\end{array}
\right), \quad
 \bm{\hat{T}}_1\left( D, \sigma_2 \right) = \left(
\begin{array}{c}
T_1(D | \sigma_2,  C,C,\sigma_1) \\
T_1(D | \sigma_2,  C,D,\sigma_1) \\
T_1(D | \sigma_2,  D,C,\sigma_1) \\
T_1(D | \sigma_2,  D,D,\sigma_1) \\
0 \\
0 \\
-1 \\
-1
\end{array}
\right).
\end{align}
By using this notation, a fair ZD strategy is one that satisfies
\begin{align}
 \sum_{a_j, \sigma} c_{a_j, \sigma} \bm{\hat{T}}_1\left( a_j, \sigma \right) &= \bm{B}
 \label{eq:fair_vec}
\end{align}
with some coefficients $\left\{ c_{a_j, \sigma} \right\}$.

\subsection{A necessary condition for the existence of fair ZD strategies}
First, we consider the consequence of Proposition \ref{prop:necessary_ZDS}.
\begin{theorem}
\label{thm:necessary_fair}
The necessary condition in Proposition \ref{prop:necessary_ZDS} for the existence of fair ZD strategies of player 1 is not satisfied if and only if
\begin{align}
 \delta_R>0, \delta_P>0, \delta_S+\delta_T< \bar{S}-\bar{T}
 \label{eq:nec_1}
\end{align}
or
\begin{align}
 \delta_R<0, \delta_P<0, \delta_S+\delta_T> \bar{T}-\bar{S}.
 \label{eq:nec_2}
\end{align}
\end{theorem}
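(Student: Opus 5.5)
The plan is to use the explicit vector $\bm{B}=\bm{s}_1-\bm{s}_2$ and check, row by row, when the condition (\ref{eq:autocratic}) of Proposition \ref{prop:necessary_ZDS} fails.

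\textbf{Setup.} Set $x:=\delta_S+\delta_T$ and $d:=\bar{T}-\bar{S}$, so $d>0$ by assumption. For each $\check{a}_1=(a_1,\sigma)$, read off the two values $B(\check{a}_1,a_2)$ for $a_2\in\{C,D\}$ from $\bm{B}$:
\begin{align}
 (C,\sigma_1)&:\ (2\delta_R,\ x-d), & (D,\sigma_1)&:\ (x+d,\ 2\delta_P), \nonumber \\
 (C,\sigma_2)&:\ (-2\delta_R,\ -x-d), & (D,\sigma_2)&:\ (d-x,\ -2\delta_P). \nonumber
\end{align}
The necessary condition holds if and only if some row is entrywise $\geq 0$ and some (possibly different) row is entrywise $\leq 0$. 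Equivalently, it fails if and only if no row is $\geq 0$, or no row is $\leq 0$.

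\textbf{Symmetry.} Under $(\delta_R,\delta_P,x)\mapsto(-\delta_R,-\delta_P,-x)$, row $(C,\sigma_1)$ becomes $(C,\sigma_2)$ and row $(D,\sigma_1)$ becomes $(D,\sigma_2)$, and conversely. So the set of rows is invariant under this map. Since (\ref{eq:nec_2}) is the image of (\ref{eq:nec_1}) under it, it suffices to treat one ``sign branch'' and transfer the result.

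\textbf{Existence conditions for the two kinds of row.} A row that is $\geq 0$ exists if and only if one of the following holds:
\begin{align}
 &(\delta_R\geq0,\ x\geq d), \quad (x\geq -d,\ \delta_P\geq 0), \quad (\delta_R\leq 0,\ x\leq -d), \quad (x\leq d,\ \delta_P\leq 0). \nonumber
\end{align}
A row that is $\leq 0$ exists if and only if one of the following holds:
\begin{align}
 &(\delta_R\leq0,\ x\leq d), \quad (x\leq -d,\ \delta_P\leq 0), \quad (\delta_R\geq 0,\ x\geq -d), \quad (x\geq d,\ \delta_P\geq 0). \nonumber
\end{align}

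\textbf{Case split on $x$.} Since $d>0$, split into the three regions $x<-d$, $-d\leq x\leq d$, and $x>d$.

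In the middle region $-d\leq x\leq d$, a $\geq0$ row always exists: take $(D,\sigma_1)$ if $\delta_P\geq0$, and $(D,\sigma_2)$ if $\delta_P\leq 0$. Likewise a $\leq 0$ row always exists: take $(C,\sigma_2)$ if $\delta_R\geq 0$, and $(C,\sigma_1)$ if $\delta_R\leq0$. So the condition never fails there.

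For $x<-d$, the $\geq0$ list reduces to ($\delta_R\leq 0$ or $\delta_P\leq 0$), and the $\leq 0$ list reduces to ($\delta_R\leq 0$ or $\delta_P\leq 0$). Both fail exactly when $\delta_R>0$ and $\delta_P>0$, which is (\ref{eq:nec_1}).

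The region $x>d$ follows by the symmetry above and yields (\ref{eq:nec_2}).

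\textbf{Main obstacle.} The only delicate point is bookkeeping: equality cases at $x=\pm d$ and at $\delta_R=0$ or $\delta_P=0$ must be assigned correctly. These all fall into the ``condition satisfied'' side, which is why the inequalities in (\ref{eq:nec_1}) and (\ref{eq:nec_2}) are strict. I would verify this by checking each boundary directly against the four rows.
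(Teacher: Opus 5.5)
Your argument is correct and complete, but it is organized differently from the paper's. The paper recasts the condition in the max--min/min--max form (\ref{eq:autocratic_minimax}) and writes its failure as the disjunction of (\ref{eq:fair_maxmin}) and (\ref{eq:fair_minmax}). It then enumerates all twelve orderings of $D_1<D_2$, $2\delta_R$, $2\delta_P$, and finds that only cases 1, 6, 7, 12 survive, i.e.\ $D_2<0<2\min\{\delta_R,\delta_P\}$ or $2\max\{\delta_R,\delta_P\}<0<D_1$.

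You instead split only on the position of $x$ relative to $\pm d$, which is equivalent to splitting on the signs of $D_2=x+d$ and $D_1=x-d$. You give explicit witness rows throughout the closed middle region. You then use the role-swap symmetry $(\delta_R,\delta_P,x)\mapsto(-\delta_R,-\delta_P,-x)$, which exchanges the $\sigma_1$-rows with the $\sigma_2$-rows, to obtain (\ref{eq:nec_2}) from (\ref{eq:nec_1}) without further work.

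Your route is shorter and makes the mirror structure of the two failure regions transparent. It also already settles the boundary cases, because the middle region is closed and the reductions for $x<-d$ use non-strict inequalities. So the verification you defer in your last paragraph is in fact already done.

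What the paper's finer, ordering-based enumeration buys is a labelled partition of parameter space that it reuses after Theorem \ref{thm:necsuf_fair}: fair ZD strategies exist in cases 3, 4, 9, 10, while cases 1, 6, 7, 12 are already excluded by the necessary condition. Your three-region split does not yield that classification, but it is not needed for this theorem.
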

\begin{proof}
First, we introduce $D_1:= \bar{S}-\bar{T}+\delta_S+\delta_T$ and $D_2:= \bar{T}-\bar{S}+\delta_S+\delta_T$, and write Eq. (\ref{eq:autocratic_minimax}) concretely:
\begin{align}
 \max\left\{ \min\left\{ 2\delta_R, D_1 \right\}, \min\left\{ D_2, 2\delta_P \right\}, \min\left\{ -2\delta_R, -D_2 \right\}, \min\left\{ -D_1, -2\delta_P \right\} \right\} &\geq 0 \nonumber \\
 \min\left\{ \max\left\{ 2\delta_R, D_1 \right\}, \max\left\{ D_2, 2\delta_P \right\}, \max\left\{ -2\delta_R, -D_2 \right\}, \max\left\{ -D_1, -2\delta_P \right\} \right\} &\leq 0.
\end{align}
These are further rewritten as
\begin{align}
 \max\left\{ \min\left\{ 2\delta_R, D_1 \right\}, \min\left\{ D_2, 2\delta_P \right\}, -\max\left\{ 2\delta_R, D_2 \right\}, -\max\left\{ D_1, 2\delta_P \right\} \right\} &\geq 0 \nonumber \\
 \min\left\{ \max\left\{ 2\delta_R, D_1 \right\}, \max\left\{ D_2, 2\delta_P \right\}, -\min\left\{ 2\delta_R, D_2 \right\}, -\min\left\{ D_1, 2\delta_P \right\} \right\} &\leq 0.
\end{align}
These inequalities are not satisfied if and only if
\begin{align}
 \min\left\{ 2\delta_R, D_1 \right\}<0, \min\left\{ D_2, 2\delta_P \right\}<0, \max\left\{ 2\delta_R, D_2 \right\}>0, \max\left\{ D_1, 2\delta_P \right\}>0
 \label{eq:fair_maxmin}
\end{align}
or
\begin{align}
 \max\left\{ 2\delta_R, D_1 \right\}>0, \max\left\{ D_2, 2\delta_P \right\}>0, \min\left\{ 2\delta_R, D_2 \right\}<0, \min\left\{ D_1, 2\delta_P \right\}<0.
 \label{eq:fair_minmax}
\end{align}
Taking $D_1<D_2$ into account, we consider 12 cases separately.
\begin{enumerate}
\item $D_1 < D_2 \leq 2\delta_R \leq 2\delta_P$\\
For this case, Eqs. (\ref{eq:fair_maxmin}) and (\ref{eq:fair_minmax}) are
\begin{align}
 D_1<0, D_2<0, 2\delta_R>0, 2\delta_P>0
\end{align}
and
\begin{align}
 2\delta_R>0, 2\delta_P>0, D_2<0, D_1<0,
\end{align}
respectively.

\item $D_1 \leq 2\delta_R < D_2 \leq 2\delta_P$\\
For this case, Eqs. (\ref{eq:fair_maxmin}) and (\ref{eq:fair_minmax}) are
\begin{align}
 D_1<0, D_2<0, D_2>0, 2\delta_P>0
\end{align}
and
\begin{align}
 2\delta_R>0, 2\delta_P>0, 2\delta_R<0, D_1<0,
\end{align}
respectively.
These inequalities are not satisfied.

\item $D_1 \leq 2\delta_R \leq 2\delta_P < D_2$\\
For this case, Eqs. (\ref{eq:fair_maxmin}) and (\ref{eq:fair_minmax}) are
\begin{align}
 D_1<0, 2\delta_P<0, D_2>0, 2\delta_P>0
\end{align}
and
\begin{align}
 2\delta_R>0, D_2>0, 2\delta_R<0, D_1<0,
\end{align}
respectively.
These inequalities are not satisfied.

\item $2\delta_R \leq D_1 < D_2 \leq 2\delta_P$\\
For this case, Eqs. (\ref{eq:fair_maxmin}) and (\ref{eq:fair_minmax}) are
\begin{align}
 2\delta_R<0, D_2<0, D_2>0, 2\delta_P>0
\end{align}
and
\begin{align}
 D_1>0, 2\delta_P>0, 2\delta_R<0, D_1<0,
\end{align}
respectively.
These inequalities are not satisfied.

\item $2\delta_R \leq D_1 \leq 2\delta_P < D_2$\\
For this case, Eqs. (\ref{eq:fair_maxmin}) and (\ref{eq:fair_minmax}) are
\begin{align}
 2\delta_R<0, 2\delta_P<0, D_2>0, 2\delta_P>0
\end{align}
and
\begin{align}
 D_1>0, D_2>0, 2\delta_R<0, D_1<0,
\end{align}
respectively.
These inequalities are not satisfied.

\item $2\delta_R \leq 2\delta_P \leq D_1 < D_2$\\
For this case, Eqs. (\ref{eq:fair_maxmin}) and (\ref{eq:fair_minmax}) are
\begin{align}
 2\delta_R<0, 2\delta_P<0, D_2>0, D_1>0
\end{align}
and
\begin{align}
 D_1>0, D_2>0, 2\delta_R<0, 2\delta_P<0,
\end{align}
respectively.

\item $D_1 < D_2 \leq 2\delta_P \leq 2\delta_R$\\
For this case, Eqs. (\ref{eq:fair_maxmin}) and (\ref{eq:fair_minmax}) are
\begin{align}
 D_1<0, D_2<0, 2\delta_R>0, 2\delta_P>0
\end{align}
and
\begin{align}
 2\delta_R>0, 2\delta_P>0, D_2<0, D_1<0,
\end{align}
respectively.

\item $D_1 \leq 2\delta_P < D_2 \leq 2\delta_R$\\
For this case, Eqs. (\ref{eq:fair_maxmin}) and (\ref{eq:fair_minmax}) are
\begin{align}
 D_1<0, 2\delta_P<0, 2\delta_R>0, 2\delta_P>0
\end{align}
and
\begin{align}
 2\delta_R>0, D_2>0, D_2<0, D_1<0,
\end{align}
respectively.
These inequalities are not satisfied.

\item $D_1 \leq 2\delta_P \leq 2\delta_R < D_2$\\
For this case, Eqs. (\ref{eq:fair_maxmin}) and (\ref{eq:fair_minmax}) are
\begin{align}
 D_1<0, 2\delta_P<0, D_2>0, 2\delta_P>0
\end{align}
and
\begin{align}
 2\delta_R>0, D_2>0, 2\delta_R<0, D_1<0,
\end{align}
respectively.
These inequalities are not satisfied.

\item $2\delta_P \leq D_1 < D_2 \leq 2\delta_R$\\
For this case, Eqs. (\ref{eq:fair_maxmin}) and (\ref{eq:fair_minmax}) are
\begin{align}
 D_1<0, 2\delta_P<0, 2\delta_R>0, D_1>0
\end{align}
and
\begin{align}
 2\delta_R>0, D_2>0, D_2<0, 2\delta_P<0,
\end{align}
respectively.
These inequalities are not satisfied.

\item $2\delta_P \leq D_1 \leq 2\delta_R < D_2$\\
For this case, Eqs. (\ref{eq:fair_maxmin}) and (\ref{eq:fair_minmax}) are
\begin{align}
 D_1<0, 2\delta_P<0, D_2>0, D_1>0
\end{align}
and
\begin{align}
 2\delta_R>0, D_2>0, 2\delta_R<0, 2\delta_P<0,
\end{align}
respectively.
These inequalities are not satisfied.

\item $2\delta_P \leq 2\delta_R \leq D_1 < D_2$\\
For this case, Eqs. (\ref{eq:fair_maxmin}) and (\ref{eq:fair_minmax}) are
\begin{align}
 2\delta_R<0, 2\delta_P<0, D_2>0, D_1>0
\end{align}
and
\begin{align}
 D_1>0, D_2>0, 2\delta_R<0, 2\delta_P<0,
\end{align}
respectively.
\end{enumerate}
Therefore, Eqs. (\ref{eq:fair_maxmin}) and (\ref{eq:fair_minmax}) are satisfied for cases 1, 6, 7, 12.
In other words, the necessary condition for the existence of fair ZD strategies is not satisfied if and only if
\begin{itemize}
 \item $D_1<D_2<0<2\delta_R \leq 2\delta_P$
 \item $2\delta_R \leq 2\delta_P <0<D_1<D_2$
 \item $D_1<D_2<0<2\delta_P \leq 2\delta_R$
 \item $2\delta_P \leq 2\delta_R <0<D_1<D_2$,
\end{itemize}
which is equivalent to the condition in Theorem \ref{thm:necessary_fair}.
\end{proof}

Thus, in contrast to the standard prisoner's dilemma game \cite{PreDys2012}, there are cases where no fair ZD strategies exist in the periodic prisoner's dilemma game.

\subsection{A necessary and sufficient condition for the existence of fair ZD strategies}
\label{subsec:necsuf}
Here, we provide a necessary and sufficient condition for the existence of fair ZD strategies.
It is useful to introduce a vector
\begin{align}
 \bm{e} &:= \left(
\begin{array}{c}
-1 \\
-1 \\
-1 \\
-1 \\
1 \\
1 \\
1 \\
1
\end{array}
\right),
\end{align}
because it satisfies
\begin{align}
\bm{\hat{T}}_1\left( D, \sigma_1 \right) &= \bm{e} - \bm{\hat{T}}_1\left( C, \sigma_1 \right) \nonumber \\
\bm{\hat{T}}_1\left( D, \sigma_2 \right) &= -\bm{e} - \bm{\hat{T}}_1\left( C, \sigma_2 \right).
\label{eq:e_PD}
\end{align}
\begin{theorem}
\label{thm:necsuf_fair}
Fair ZD strategies of player 1 exist if and only if
\begin{align}
 \left\{
\begin{array}{c}
\bar{S}-\bar{T}+\delta_S+\delta_T \leq 2\min\left\{ \delta_R, \delta_P \right\} \\
\bar{T}-\bar{S}+\delta_S+\delta_T \geq 2\max\left\{ \delta_R, \delta_P \right\}
\label{eq:necsuf_1}
\end{array}
\right.
\end{align}
or
\begin{align}
 \left\{
\begin{array}{c}
\bar{S}-\bar{T}+\delta_S+\delta_T \geq 2\min\left\{ \delta_R, \delta_P \right\} \\
\bar{T}-\bar{S}+\delta_S+\delta_T \leq 2\max\left\{ \delta_R, \delta_P \right\}.
\label{eq:necsuf_2}
\end{array}
\right.
\end{align}
\end{theorem}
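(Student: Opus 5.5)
The plan is to turn the vector equation (\ref{eq:fair_vec}) into a finite system of interval constraints, and then decide when that system is feasible. First, I would use Eq. (\ref{eq:e_PD}) to eliminate $\bm{\hat{T}}_1(D,\sigma_1)$ and $\bm{\hat{T}}_1(D,\sigma_2)$. Set $x:=c_{C,\sigma_1}-c_{D,\sigma_1}$, $y:=c_{C,\sigma_2}-c_{D,\sigma_2}$ and $z:=c_{D,\sigma_1}-c_{D,\sigma_2}$. Then (\ref{eq:fair_vec}) becomes
\begin{align}
 x\,\bm{\hat{T}}_1(C,\sigma_1)+y\,\bm{\hat{T}}_1(C,\sigma_2)+z\,\bm{e}=\bm{B}.
\end{align}
Write $p_i:=T_1(C|\sigma_2,\cdot,\sigma_1)\in[0,1]$ and $q_i:=T_1(C|\sigma_1,\cdot,\sigma_2)\in[0,1]$ for the four previous action profiles, and keep the abbreviations $D_1,D_2$ from the proof of Theorem \ref{thm:necessary_fair}. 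Reading off the eight rows gives
\begin{align}
 y p_1 &= 2\delta_R+x+z, & y p_2 &= D_1+x+z, & y p_3 &= D_2+z, & y p_4 &= 2\delta_P+z, \nonumber\\
 x q_1 &= y-z-2\delta_R, & x q_2 &= y-z-D_2, & x q_3 &= -D_1-z, & x q_4 &= -2\delta_P-z.
\end{align}
So a fair ZD strategy exists if and only if there are reals $x,y,z$ such that the four right-hand sides of the first line lie in the closed interval between $0$ and $y$, and the four of the second line lie in the interval between $0$ and $x$. We also need $\bm{B}\neq 0$, which holds automatically because $\bar{T}>\bar{S}$.

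Next, I would split into the sign cases $(\operatorname{sgn}x,\operatorname{sgn}y)$, including the degenerate cases $x=0$ or $y=0$, which force several entries of $\bm{B}$ to coincide with $-z$ and will fall inside the boundary of the claimed region. In each case the constraints say that $\{2\delta_R,D_1\}$ lies in an interval of length $|y|$ shifted by $x+z$, that $\{D_2,2\delta_P\}$ lies in an interval of length $|y|$ shifted by $z$, and similarly for the second line. For necessity, I would combine pairs of these inequalities to eliminate $x,y,z$. For instance, with $y>0$ and $x>0$, the constraints $D_1+x+z\le y$, $y-z-D_2\ge 0$ and $-2\delta_P-z\le x$ should chain into inequalities comparing $D_1,D_2$ with $2\delta_R,2\delta_P$. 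Each sign pattern should yield exactly one of the orderings $D_1\le 2\min\{\delta_R,\delta_P\}$, $D_2\ge 2\max\{\delta_R,\delta_P\}$ or its reverse, i.e. (\ref{eq:necsuf_1}) or (\ref{eq:necsuf_2}). The patterns $x,y$ of the same sign should give one alternative and opposite signs the other; I expect this association but have not verified it.

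For sufficiency, I would construct the solution explicitly. Under (\ref{eq:necsuf_1}), the four numbers $2\delta_R,2\delta_P$ lie in $[D_1,D_2]$, so I would try $z=-D_2$ or $z=-D_1$ to pin one endpoint to $0$. I would then take $|x|,|y|$ equal to $D_2-D_1=2(\bar{T}-\bar{S})>0$, so that each interval has exactly the right length and all $p_i,q_i$ land in $[0,1]$. Under (\ref{eq:necsuf_2}), where now $D_1,D_2$ are sandwiched by $2\delta_R,2\delta_P$, the analogous choice with the roles of the endpoints swapped (and the opposite relative sign of $x,y$) should work. This second case needs separate care, since (\ref{eq:necsuf_2}) also forces $|\delta_R-\delta_P|\ge\bar{T}-\bar{S}$.

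The main obstacle I expect is the necessity direction. The eight interval constraints couple the two halves through $z$ and through $y-z$, $x+z$, so naive pairwise elimination may produce conditions weaker than (\ref{eq:necsuf_1})/(\ref{eq:necsuf_2}). I would first try to guess the decisive triples of inequalities by checking the sufficiency construction's tight constraints. As a cross-check, the result must refine Theorem \ref{thm:necessary_fair}: the complement of (\ref{eq:necsuf_1})$\cup$(\ref{eq:necsuf_2}) must contain the region (\ref{eq:nec_1})$\cup$(\ref{eq:nec_2}).
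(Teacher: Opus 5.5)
\textbf{There is a genuine gap: the proposed witness fails, and necessity is missing its key step.}

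Your reduction is right. With $(x,y,z)=(c_1,c_2,c_\mathrm{E})$, your eight equations are exactly the paper's system. Your guess that equal signs of $x,y$ lead to (\ref{eq:necsuf_1}) and opposite signs to (\ref{eq:necsuf_2}) is also correct.

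However, the sufficiency construction you propose fails. Take (\ref{eq:necsuf_1}) with $2\delta_P<D_2$ and $2\delta_P>D_1$, for instance the paper's own example $(\bar R,\bar S,\bar T,\bar P)=(3,0,5,1)$ with all $\delta$'s equal to $1$.
\begin{itemize}
\item With $z=-D_2$, the $p_4$ and $q_4$ equations read $yp_4=2\delta_P-D_2<0$ and $xq_4=D_2-2\delta_P>0$, so $y<0<x$. The $q_2$ equation then becomes $xq_2=y-z-D_2=y<0$, which is impossible.
\item With $z=-D_1$, you get $x<0<y$. The $p_3$ and $q_2$ equations force $y=D_2-D_1$. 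Then $xq_1=D_2-2\delta_R\ge 0$ with $x<0$ fails unless $D_2=2\delta_R$.
\item The normalization $|x|=|y|$ is also wrong in general. Feasible solutions have $y-x=2(\delta_R-\delta_P)$ when the signs agree and $x+y=D_2-D_1\neq 0$ when they differ.
\end{itemize}
Your necessity direction is only a plan, and its key association is flagged as unverified.

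The missing idea settles both directions and shows your worry about weak pairwise elimination is unfounded. Add the equations in the pairs on which $\bm{B}$ is antisymmetric: $(p_1,q_1)$, $(p_4,q_4)$, $(p_2,q_3)$ and $(p_3,q_2)$. This eliminates both $\bm{B}$ and $z$:
\begin{align*}
 y(1-p_1)+x(1-q_1)=0,\quad yp_4+xq_4=0,\quad yp_2=x(1-q_3),\quad y(1-p_3)=xq_2 .
\end{align*}

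\emph{Same signs.} If $x,y$ are nonzero with equal sign, the first two identities force $p_1=q_1=1$ and $p_4=q_4=0$. Hence $z=-2\delta_P$ and $y=x+2\delta_R-2\delta_P$. The remaining four equations then reduce to:
\begin{itemize}
\item for $x,y>0$: (\ref{eq:necsuf_1}) together with $x\ge\max\{2\delta_P-D_1,D_2-2\delta_R\}$;
\item for $x,y<0$: the impossible chain $D_2\le 2\min\{\delta_R,\delta_P\}\le 2\max\{\delta_R,\delta_P\}\le D_1$.
\end{itemize}

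\emph{Opposite signs.} The last two identities force $p_2=q_2=0$ and $p_3=q_3=1$, i.e.\ $x+z=-D_1$ and $y-z=D_2$. The remaining equations give:
\begin{itemize}
\item for $x>0>y$: $2\delta_R\le D_1$, $D_2\le 2\delta_P$ and $x\ge\max\{D_2-2\delta_R,2\delta_P-D_1\}$;
\item for $x<0<y$: $2\delta_P\le D_1$, $D_2\le 2\delta_R$ and $x\le\min\{D_2-2\delta_R,2\delta_P-D_1\}$.
\end{itemize}
Together these two sub-cases are exactly (\ref{eq:necsuf_2}). The cases $x=0$ or $y=0$ only produce boundary points.

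\emph{Correct witnesses.}
\begin{itemize}
\item Under (\ref{eq:necsuf_1}): $z=-2\delta_P$, $x>\max\{2\delta_P-D_1,D_2-2\delta_R\}$, $y=x+2\delta_R-2\delta_P$.
\item Under (\ref{eq:necsuf_2}) with $2\delta_R\le D_1<D_2\le 2\delta_P$: $x>\max\{D_2-2\delta_R,2\delta_P-D_1\}$, $y=D_2-D_1-x$, $z=-D_1-x$. The other ordering is symmetric with $x<0<y$.
\end{itemize}
This is precisely the paper's case analysis on the signs of $(c_1,c_2)$.
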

Before proceeding to the proof, we explain the outline.
The basic strategy for the proof is to determine whether $\bm{s}_1 - \bm{s}_2$ can be represented as a feasible linear combination (\ref{eq:fair_vec}) of the Press-Dyson vectors $\bm{\hat{T}}_1\left( a_j, \sigma \right)$ under the probability constraints $0\leq T_1 \left( a_1 | \sigma, \bm{a}^\prime, \sigma^\prime \right) \leq 1$.
Therefore, we first find necessary conditions for the payoffs and coefficients of the Press-Dyson vectors under the constraints $0\leq T_1 \left( a_1 | \sigma, \bm{a}^\prime, \sigma^\prime \right) \leq 1$, and then we explicitly solve the simultaneous equations with respect to $T_1 \left( a_1 | \sigma, \bm{a}^\prime, \sigma^\prime \right)$ under the necessary conditions.
It should be noted that case distinctions with respect to the signs of coefficients of the Press-Dyson vectors are required.

\begin{proof}
According to Eqs. (\ref{eq:fair_vec}) and (\ref{eq:e_PD}), a fair ZD strategy, if exists, is written in the form
\begin{align}
 c_1 \bm{\hat{T}}_1\left( C, \sigma_1 \right) + c_2 \bm{\hat{T}}_1\left( C, \sigma_2 \right) + c_\mathrm{E} \bm{e} &= \bm{B},
\end{align}
where $c_1$, $c_2$ and $c_\mathrm{E}$ are some constants.
Explicitly, it is written as
\begin{align}
\left(
\begin{array}{c}
-c_1  + c_2 T_1(C | \sigma_2,  C,C,\sigma_1) -c_\mathrm{E} \\
-c_1  + c_2 T_1(C | \sigma_2,  C,D,\sigma_1) -c_\mathrm{E} \\
c_2 T_1(C | \sigma_2,  D,C,\sigma_1) -c_\mathrm{E} \\
c_2 T_1(C | \sigma_2,  D,D,\sigma_1) -c_\mathrm{E} \\
c_1 T_1(C | \sigma_1,  C,C,\sigma_2) -c_2 + c_\mathrm{E} \\
c_1 T_1(C | \sigma_1,  C,D,\sigma_2) -c_2 + c_\mathrm{E}  \\
c_1 T_1(C | \sigma_1,  D,C,\sigma_2) + c_\mathrm{E}  \\
c_1 T_1(C | \sigma_1,  D,D,\sigma_2) + c_\mathrm{E} 
\end{array}
\right)
 = \left(
\begin{array}{c}
2\delta_R \\
\bar{S}-\bar{T}+\delta_S+\delta_T \\
\bar{T}-\bar{S}+\delta_S+\delta_T \\
2\delta_P \\
-2\delta_R \\
\bar{S}-\bar{T}-\delta_S-\delta_T \\
\bar{T}-\bar{S}-\delta_S-\delta_T \\
-2\delta_P
\end{array}
\right).
\label{eq:fairZD_e}
\end{align}
According to the signs of $c_1$ and $c_2$, we consider four cases separately.

\begin{enumerate}
 \item $c_1\geq 0$ and $c_2\geq 0$\\
 For this case, the inequalities $0\leq T_1(C | \sigma, \bm{a}^\prime, \sigma^\prime) \leq 1$ and Eq. (\ref{eq:fairZD_e}) lead to
 \begin{align}
 -c_1-c_\mathrm{E} &\leq 2\delta_R \leq -c_1 + c_2 - c_\mathrm{E} \nonumber \\
 -c_1-c_\mathrm{E} &\leq \bar{S}-\bar{T}+\delta_S+\delta_T  \leq -c_1 + c_2 - c_\mathrm{E} \nonumber \\
 -c_\mathrm{E} &\leq \bar{T}-\bar{S}+\delta_S+\delta_T  \leq c_2-c_\mathrm{E} \nonumber \\
 -c_\mathrm{E} &\leq 2\delta_P  \leq c_2-c_\mathrm{E} \nonumber \\
 -c_2+c_\mathrm{E} &\leq -2\delta_R \leq c_1 - c_2 + c_\mathrm{E} \nonumber \\
 -c_2+c_\mathrm{E} &\leq \bar{S}-\bar{T}-\delta_S-\delta_T \leq c_1 - c_2 + c_\mathrm{E} \nonumber \\
 c_\mathrm{E} &\leq \bar{T}-\bar{S}-\delta_S-\delta_T \leq c_1+c_\mathrm{E} \nonumber \\
 c_\mathrm{E} &\leq -2\delta_P \leq c_1+c_\mathrm{E}.
\end{align}
These inequalities are equivalent to
 \begin{align}
 -c_1+c_2-c_\mathrm{E} &= 2\delta_R \nonumber \\
 -c_\mathrm{E} &= 2\delta_P \nonumber \\
 -c_1-c_\mathrm{E} &\leq \bar{S}-\bar{T}+\delta_S+\delta_T \leq 2\min\left\{ \delta_R, \delta_P \right\} \nonumber \\
 2\max\left\{ \delta_R, \delta_P \right\} &\leq \bar{T}-\bar{S}+\delta_S+\delta_T \leq c_2-c_\mathrm{E}.
\end{align}
Therefore, we find
\begin{align}
 -c_1+2\delta_P &\leq \bar{S}-\bar{T}+\delta_S+\delta_T \leq 2\min\left\{ \delta_R, \delta_P \right\} \leq 2\max\left\{ \delta_R, \delta_P \right\} \leq \bar{T}-\bar{S}+\delta_S+\delta_T \leq c_1+2\delta_R.
\end{align}
These inequalities imply Eq. (\ref{eq:necsuf_1}) and
\begin{align}
 c_1 &\geq \max\left\{  2\delta_P - (\bar{S}-\bar{T}+\delta_S+\delta_T), \bar{T}-\bar{S}+\delta_S+\delta_T - 2\delta_R \right\}.
 \label{eq:c1_g0g0}
\end{align}
It should be noted that $c_2=c_1+2\delta_R-2\delta_P$.
These are a necessary condition for the payoffs and $(c_1, c_2, c_\mathrm{E})$.

Next, we solve the simultaneous equations (\ref{eq:fairZD_e}) with respect to $T_1 \left( a_1 | \sigma, \bm{a}^\prime, \sigma^\prime \right)$ under the necessary condition.

If $c_1\neq 0$ and $c_2\neq 0$, from Eq. (\ref{eq:fairZD_e}) we find
\begin{align}
 T_1(C | \sigma_2,  C,C,\sigma_1) &= 1 \nonumber \\
 T_1(C | \sigma_2,  D,D,\sigma_1) &= 0 \nonumber \\
 T_1(C | \sigma_1,  C,C,\sigma_2) &= 1 \nonumber \\
 T_1(C | \sigma_1,  D,D,\sigma_2) &= 0
 \label{eq:fairZDS_g0g0_1}
\end{align}
and
\begin{align}
 T_1(C | \sigma_2,  C,D,\sigma_1) &= \frac{\bar{S}-\bar{T}+\delta_S+\delta_T + c_1 - 2\delta_P}{c_1+2\delta_R-2\delta_P} \nonumber \\
 T_1(C | \sigma_2,  D,C,\sigma_1) &= \frac{\bar{T}-\bar{S}+\delta_S+\delta_T - 2\delta_P}{c_1+2\delta_R-2\delta_P} \nonumber \\
 T_1(C | \sigma_1,  C,D,\sigma_2) &= \frac{\bar{S}-\bar{T}-\delta_S-\delta_T + c_1 + 2\delta_R}{c_1} \nonumber \\
 T_1(C | \sigma_1,  D,C,\sigma_2) &= \frac{\bar{T}-\bar{S}-\delta_S-\delta_T + 2\delta_P}{c_1}.
 \label{eq:fairZDS_g0g0_2}
\end{align}
Therefore, we can explicitly construct fair ZD strategies.

If $c_1 = 0$ and $c_2\neq 0$, from Eq. (\ref{eq:fairZD_e}) we find
\begin{align}
 -2\delta_R &= \bar{S}-\bar{T}-\delta_S-\delta_T = -c_2 +c_\mathrm{E} \nonumber \\
 -2\delta_P &= \bar{T}-\bar{S}-\delta_S-\delta_T = c_\mathrm{E}
 \label{eq:case_0p}
\end{align}
and
\begin{align}
 T_1(C | \sigma_2,  C,C,\sigma_1) &= \frac{2\delta_R + c_\mathrm{E}}{c_2} = 1 \nonumber \\
 T_1(C | \sigma_2,  C,D,\sigma_1) &= \frac{\bar{S}-\bar{T}+\delta_S+\delta_T+c_\mathrm{E}}{c_2} = 0 \nonumber \\
 T_1(C | \sigma_2,  D,C,\sigma_1) &= \frac{\bar{T}-\bar{S}+\delta_S+\delta_T+c_\mathrm{E}}{c_2} = 1 \nonumber \\
 T_1(C | \sigma_2,  D,D,\sigma_1) &= \frac{2\delta_P+c_\mathrm{E}}{c_2} = 0.
 \label{eq:fairZDS_g0g0_0p}
\end{align}
For this case, $T_1(C | \sigma_1,  \bm{a}^\prime, \sigma_2)$ is arbitrary.
Therefore, we can explicitly construct fair ZD strategies.
But the condition (\ref{eq:case_0p}) is a special case of Eq. (\ref{eq:necsuf_1}).

If $c_1 \neq 0$ and $c_2= 0$, from Eq. (\ref{eq:fairZD_e}) we find
\begin{align}
 2\delta_R &= \bar{S}-\bar{T}+\delta_S+\delta_T = -c_1 -c_\mathrm{E} \nonumber \\
 2\delta_P &= \bar{T}-\bar{S}+\delta_S+\delta_T = -c_\mathrm{E}
 \label{eq:case_p0}
\end{align}
and
\begin{align}
 T_1(C | \sigma_1,  C,C,\sigma_2) &= \frac{-2\delta_R - c_\mathrm{E}}{c_1} = 1 \nonumber \\
 T_1(C | \sigma_1,  C,D,\sigma_2) &= \frac{\bar{S}-\bar{T}-\delta_S-\delta_T-c_\mathrm{E}}{c_1} = 0 \nonumber \\
 T_1(C | \sigma_1,  D,C,\sigma_2) &= \frac{\bar{T}-\bar{S}-\delta_S-\delta_T-c_\mathrm{E}}{c_1} = 1 \nonumber \\
 T_1(C | \sigma_1,  D,D,\sigma_2) &= \frac{-2\delta_P-c_\mathrm{E}}{c_1} = 0.
 \label{eq:fairZDS_g0g0_p0}
\end{align}
For this case, $T_1(C | \sigma_2,  \bm{a}^\prime, \sigma_1)$ is arbitrary.
Therefore, we can explicitly construct fair ZD strategies.
But the condition (\ref{eq:case_p0}) is a special case of Eq. (\ref{eq:necsuf_1}).

Finally, if $c_1 = 0$ and $c_2= 0$, from Eq. (\ref{eq:fairZD_e}) we find
\begin{align}
 2\delta_R &= \bar{S}-\bar{T}+\delta_S+\delta_T = 2\delta_P = \bar{T}-\bar{S}+\delta_S+\delta_T = -c_\mathrm{E}
\end{align}
This contradicts with $\bar{T}-\bar{S} > 0$.
Therefore, we cannot construct a fair ZD strategy for the case.

 \item $c_1\geq 0$ and $c_2< 0$\\
 For this case, the inequalities $0\leq T_1(C | \sigma, \bm{a}^\prime, \sigma^\prime) \leq 1$ and Eq. (\ref{eq:fairZD_e}) lead to
 \begin{align}
 -c_1+c_2-c_\mathrm{E} &\leq 2\delta_R \leq -c_1 - c_\mathrm{E} \nonumber \\
 -c_1+c_2-c_\mathrm{E} &\leq \bar{S}-\bar{T}+\delta_S+\delta_T  \leq -c_1 - c_\mathrm{E} \nonumber \\
 c_2-c_\mathrm{E} &\leq \bar{T}-\bar{S}+\delta_S+\delta_T  \leq -c_\mathrm{E} \nonumber \\
 c_2-c_\mathrm{E} &\leq 2\delta_P  \leq -c_\mathrm{E} \nonumber \\
 -c_2+c_\mathrm{E} &\leq -2\delta_R \leq c_1 - c_2 + c_\mathrm{E} \nonumber \\
 -c_2+c_\mathrm{E} &\leq \bar{S}-\bar{T}-\delta_S-\delta_T \leq c_1 - c_2 + c_\mathrm{E} \nonumber \\
 c_\mathrm{E} &\leq \bar{T}-\bar{S}-\delta_S-\delta_T \leq c_1+c_\mathrm{E} \nonumber \\
 c_\mathrm{E} &\leq -2\delta_P \leq c_1+c_\mathrm{E}.
\end{align}
These inequalities are equivalent to
 \begin{align}
 -c_1-c_\mathrm{E} &= \bar{S}-\bar{T}+\delta_S+\delta_T \nonumber \\
 c_2-c_\mathrm{E} &= \bar{T}-\bar{S}+\delta_S+\delta_T \nonumber \\
 -c_1+c_2-c_\mathrm{E} &\leq 2\delta_R \leq \bar{S}-\bar{T}+\delta_S+\delta_T \nonumber \\
 \bar{T}-\bar{S}+\delta_S+\delta_T &\leq 2\delta_P \leq -c_\mathrm{E}.
\end{align}
Therefore, we find
\begin{align}
 \bar{T}-\bar{S}+\delta_S+\delta_T - c_1 &\leq 2\delta_R \leq \bar{S}-\bar{T}+\delta_S+\delta_T < \bar{T}-\bar{S}+\delta_S+\delta_T \leq 2\delta_P \leq \bar{S}-\bar{T}+\delta_S+\delta_T + c_1.
\end{align}
These inequalities imply
\begin{align}
\bar{S}-\bar{T}+\delta_S+\delta_T \geq 2\delta_R \nonumber \\
\bar{T}-\bar{S}+\delta_S+\delta_T \leq 2\delta_P
\label{eq:necsuf_2-1}
\end{align}
and
\begin{align}
 c_1 &\geq \max\left\{ \bar{T}-\bar{S}+\delta_S+\delta_T - 2\delta_R, 2\delta_P - (\bar{S}-\bar{T}+\delta_S+\delta_T) \right\}.
 \label{eq:c1_g0l0}
\end{align}
It should be noted that $c_2=2(\bar{T}-\bar{S})-c_1$.
These are a necessary condition for the payoffs and $(c_1, c_2, c_\mathrm{E})$.

Next, we solve the simultaneous equations (\ref{eq:fairZD_e}) with respect to $T_1 \left( a_1 | \sigma, \bm{a}^\prime, \sigma^\prime \right)$ under the necessary condition.

If $c_1\neq 0$, from Eq. (\ref{eq:fairZD_e}) we find
\begin{align}
 T_1(C | \sigma_2,  C,D,\sigma_1) &= 0 \nonumber \\
 T_1(C | \sigma_2,  D,C,\sigma_1) &= 1 \nonumber \\
 T_1(C | \sigma_1,  C,D,\sigma_2) &= 0 \nonumber \\
 T_1(C | \sigma_1,  D,C,\sigma_2) &= 1
 \label{eq:fairZDS_g0l0_1}
\end{align}
and
\begin{align}
 T_1(C | \sigma_2,  C,C,\sigma_1) &= \frac{\bar{S}-\bar{T}+\delta_S+\delta_T - 2\delta_R}{c_1-2(\bar{T}-\bar{S})} \nonumber \\
 T_1(C | \sigma_2,  D,D,\sigma_1) &= \frac{\bar{S}-\bar{T}+\delta_S+\delta_T + c_1 - 2\delta_P}{c_1-2(\bar{T}-\bar{S})} \nonumber \\
 T_1(C | \sigma_1,  C,C,\sigma_2) &= \frac{\bar{T}-\bar{S}+\delta_S+\delta_T - 2\delta_R}{c_1} \nonumber \\
 T_1(C | \sigma_1,  D,D,\sigma_2) &= \frac{\bar{S}-\bar{T}+\delta_S+\delta_T + c_1 - 2\delta_P}{c_1}.
 \label{eq:fairZDS_g0l0_2}
\end{align}
Therefore, we can explicitly construct fair ZD strategies.

If $c_1 = 0$, from Eq. (\ref{eq:fairZD_e}) we find
\begin{align}
 -2\delta_R &= \bar{S}-\bar{T}-\delta_S-\delta_T = -c_2 +c_\mathrm{E} \nonumber \\
 -2\delta_P &= \bar{T}-\bar{S}-\delta_S-\delta_T = c_\mathrm{E}.
\end{align}
Then we obtain
\begin{align}
 0< -c_2 = -2 (\bar{T}-\bar{S})<0,
\end{align}
leading to contradiction.
Therefore, we cannot construct a fair ZD strategy for the case.

 \item $c_1< 0$ and $c_2\geq 0$\\
 For this case, the inequalities $0\leq T_1(C | \sigma, \bm{a}^\prime, \sigma^\prime) \leq 1$ and Eq. (\ref{eq:fairZD_e}) lead to
 \begin{align}
 -c_1-c_\mathrm{E} &\leq 2\delta_R \leq -c_1 + c_2 - c_\mathrm{E} \nonumber \\
 -c_1-c_\mathrm{E} &\leq \bar{S}-\bar{T}+\delta_S+\delta_T  \leq -c_1 + c_2 - c_\mathrm{E} \nonumber \\
 -c_\mathrm{E} &\leq \bar{T}-\bar{S}+\delta_S+\delta_T  \leq c_2 -c_\mathrm{E} \nonumber \\
 -c_\mathrm{E} &\leq 2\delta_P  \leq c_2 -c_\mathrm{E} \nonumber \\
 c_1-c_2+c_\mathrm{E} &\leq -2\delta_R \leq - c_2 + c_\mathrm{E} \nonumber \\
 c_1-c_2+c_\mathrm{E} &\leq \bar{S}-\bar{T}-\delta_S-\delta_T \leq - c_2 + c_\mathrm{E} \nonumber \\
 c_1+c_\mathrm{E} &\leq \bar{T}-\bar{S}-\delta_S-\delta_T \leq c_\mathrm{E} \nonumber \\
 c_1+c_\mathrm{E} &\leq -2\delta_P \leq c_\mathrm{E}.
\end{align}
These inequalities are equivalent to
 \begin{align}
 -c_1-c_\mathrm{E} &= \bar{S}-\bar{T}+\delta_S+\delta_T \nonumber \\
 c_2-c_\mathrm{E} &= \bar{T}-\bar{S}+\delta_S+\delta_T \nonumber \\
 \bar{T}-\bar{S}+\delta_S+\delta_T &\leq 2\delta_R \leq -c_1+c_2-c_\mathrm{E} \nonumber \\
 -c_\mathrm{E} &\leq 2\delta_P \leq \bar{S}-\bar{T}+\delta_S+\delta_T.
\end{align}
Therefore, we find
\begin{align}
 \bar{S}-\bar{T}+\delta_S+\delta_T + c_1 &\leq 2\delta_P \leq \bar{S}-\bar{T}+\delta_S+\delta_T < \bar{T}-\bar{S}+\delta_S+\delta_T \leq 2\delta_R \leq \bar{T}-\bar{S}+\delta_S+\delta_T - c_1.
\end{align}
These inequalities imply
\begin{align}
\bar{S}-\bar{T}+\delta_S+\delta_T \geq 2\delta_P \nonumber \\
\bar{T}-\bar{S}+\delta_S+\delta_T \leq 2\delta_R
\label{eq:necsuf_2-2}
\end{align}
and
\begin{align}
 c_1 &\leq \min\left\{ \bar{T}-\bar{S}+\delta_S+\delta_T - 2\delta_R, 2\delta_P - (\bar{S}-\bar{T}+\delta_S+\delta_T) \right\}.
 \label{eq:c1_l0g0}
\end{align}
It should be noted that $c_2=2(\bar{T}-\bar{S})-c_1$.
These are a necessary condition for the payoffs and $(c_1, c_2, c_\mathrm{E})$.

Next, we solve the simultaneous equations (\ref{eq:fairZD_e}) with respect to $T_1 \left( a_1 | \sigma, \bm{a}^\prime, \sigma^\prime \right)$ under the necessary condition.

If $c_2\neq 0$, from Eq. (\ref{eq:fairZD_e}) we find
\begin{align}
 T_1(C | \sigma_2,  C,D,\sigma_1) &= 0 \nonumber \\
 T_1(C | \sigma_2,  D,C,\sigma_1) &= 1 \nonumber \\
 T_1(C | \sigma_1,  C,D,\sigma_2) &= 0 \nonumber \\
 T_1(C | \sigma_1,  D,C,\sigma_2) &= 1
 \label{eq:fairZDS_l0g0_1}
\end{align}
and
\begin{align}
 T_1(C | \sigma_2,  C,C,\sigma_1) &= \frac{2\delta_R- (\bar{S}-\bar{T}+\delta_S+\delta_T)}{2(\bar{T}-\bar{S})-c_1} \nonumber \\
 T_1(C | \sigma_2,  D,D,\sigma_1) &= \frac{2\delta_P - c_1 - (\bar{S}-\bar{T}+\delta_S+\delta_T)}{2(\bar{T}-\bar{S})-c_1} \nonumber \\
 T_1(C | \sigma_1,  C,C,\sigma_2) &= \frac{\bar{T}-\bar{S}+\delta_S+\delta_T - 2\delta_R}{c_1} \nonumber \\
 T_1(C | \sigma_1,  D,D,\sigma_2) &= \frac{\bar{S}-\bar{T}+\delta_S+\delta_T + c_1 - 2\delta_P}{c_1}.
 \label{eq:fairZDS_l0g0_2}
\end{align}
Therefore, we can explicitly construct fair ZD strategies.

If $c_2 = 0$, from Eq. (\ref{eq:fairZD_e}) we find
\begin{align}
 2\delta_R &= \bar{S}-\bar{T}+\delta_S+\delta_T = -c_1 -c_\mathrm{E} \nonumber \\
 2\delta_P &= \bar{T}-\bar{S}+\delta_S+\delta_T = -c_\mathrm{E}.
\end{align}
Then we obtain
\begin{align}
 0< -c_1 = -2 (\bar{T}-\bar{S})<0,
\end{align}
leading to contradiction.
Therefore, we cannot construct a fair ZD strategy for the case.

\item $c_1< 0$ and $c_2< 0$\\
 For this case, the inequalities $0\leq T_1(C | \sigma, \bm{a}^\prime, \sigma^\prime) \leq 1$ and Eq. (\ref{eq:fairZD_e}) lead to
 \begin{align}
 -c_1+c_2-c_\mathrm{E} &\leq 2\delta_R \leq -c_1 - c_\mathrm{E} \nonumber \\
 -c_1+c_2-c_\mathrm{E} &\leq \bar{S}-\bar{T}+\delta_S+\delta_T  \leq -c_1 - c_\mathrm{E} \nonumber \\
 c_2-c_\mathrm{E} &\leq \bar{T}-\bar{S}+\delta_S+\delta_T  \leq -c_\mathrm{E} \nonumber \\
 c_2-c_\mathrm{E} &\leq 2\delta_P  \leq -c_\mathrm{E} \nonumber \\
 c_1-c_2+c_\mathrm{E} &\leq -2\delta_R \leq - c_2 + c_\mathrm{E} \nonumber \\
 c_1-c_2+c_\mathrm{E} &\leq \bar{S}-\bar{T}-\delta_S-\delta_T \leq - c_2 + c_\mathrm{E} \nonumber \\
 c_1+c_\mathrm{E} &\leq \bar{T}-\bar{S}-\delta_S-\delta_T \leq c_\mathrm{E} \nonumber \\
 c_1+c_\mathrm{E} &\leq -2\delta_P \leq c_\mathrm{E}.
\end{align}
These inequalities are equivalent to
 \begin{align}
 -c_1+c_2-c_\mathrm{E} &= 2\delta_R \nonumber \\
 -c_\mathrm{E} &=2\delta_P \nonumber \\
 2\max\left\{ \delta_R, \delta_P \right\} &\leq \bar{S}-\bar{T}+\delta_S+\delta_T \leq -c_1-c_\mathrm{E} \nonumber \\
 c_2-c_\mathrm{E} &\leq \bar{T}-\bar{S}+\delta_S+\delta_T \leq 2\min\left\{ \delta_R, \delta_P \right\}.
\end{align}
Therefore, we find
\begin{align}
 c_1+2\delta_R &\leq \bar{T}-\bar{S}+\delta_S+\delta_T \leq 2\min\left\{ \delta_R, \delta_P \right\} \leq 2\max\left\{ \delta_R, \delta_P \right\} \leq \bar{S}-\bar{T}+\delta_S+\delta_T \leq -c_1+2\delta_P.
\end{align}
However, these inequalities contradict with $\bar{T}-\bar{S}>0$.
Therefore, it is impossible to construct a fair ZD strategy for this case.
\end{enumerate}

According to the four cases, we find that fair ZD strategies exist if and only if the condition (\ref{eq:necsuf_1}), (\ref{eq:necsuf_2-1}), or (\ref{eq:necsuf_2-2}) holds.
(For each case, there exists feasible $(c_1, c_2, c_\mathrm{E})$.)
It should be noted that the conditions (\ref{eq:necsuf_2-1}) and (\ref{eq:necsuf_2-2}) are integrated into Eq. (\ref{eq:necsuf_2}).
\end{proof}

As a result, even if the necessary condition in Theorem \ref{thm:necessary_fair} is satisfied, it is not always possible to construct a fair ZD strategy.
For example, if
\begin{align}
 \delta_R &> 0 \nonumber \\
 \delta_P &< 0 \nonumber \\
 \bar{S}-\bar{T}+\delta_S+\delta_T &< 2\delta_P \nonumber \\
 \bar{T}-\bar{S}+\delta_S+\delta_T &< 2\delta_R
\end{align}
hold, the necessary condition in Theorem \ref{thm:necessary_fair} is satisfied but the condition in Theorem \ref{thm:necsuf_fair} is not satisfied.

In addition, if Eq. (\ref{eq:nec_1}) holds, it actually satisfies neither Eq. (\ref{eq:necsuf_1}) nor Eq. (\ref{eq:necsuf_2}).
We obtain the same result for Eq. (\ref{eq:nec_2}).
By using the notations in the proof of Theorem \ref{thm:necessary_fair}, fair ZD strategies exist for cases 3, 4, 9, 10.

In order to check the validity of this Theorem, we have performed numerical simulations.
In Figure \ref{fig:linear_ZDS}, we have provided our numerical results.
In the left panel, we display the result for the case $\left( \bar{R}, \bar{S}, \bar{T}, \bar{P} \right)=(3, 0, 5, 1)$ and $\left( \delta_{R}, \delta_{S}, \delta_{T}, \delta_{P} \right)=(1, 1, 1, 1)$.
In the right panel, we display the result for the case $\left( \bar{R}, \bar{S}, \bar{T}, \bar{P} \right)=(3, 0, 5, 1)$ and $\left( \delta_{R}, \delta_{S}, \delta_{T}, \delta_{P} \right)=(2, 1, 1, 1)$.
For both cases, the condition (\ref{eq:necsuf_1}) holds, and a fair ZD strategy is given by (\ref{eq:fairZDS_g0g0_1}) and (\ref{eq:fairZDS_g0g0_2}), where $c_1$ is given by the equality condition of Eq. (\ref{eq:c1_g0g0}).
The strategy of player $2$ is given by $200$ randomly generated memory-one strategies.
Each $\left\langle s_j \right\rangle^{*}$ is calculated by time average over $10^6$ time steps.
We find that the fair ZD strategy indeed enforces a linear relation $\left\langle s_1 \right\rangle^{*}=\left\langle s_2 \right\rangle^{*}$.
Numerical results for other parameter values are given in \ref{app:numerical}.
\begin{figure}
\begin{center}
\includegraphics[clip, width=6.0cm]{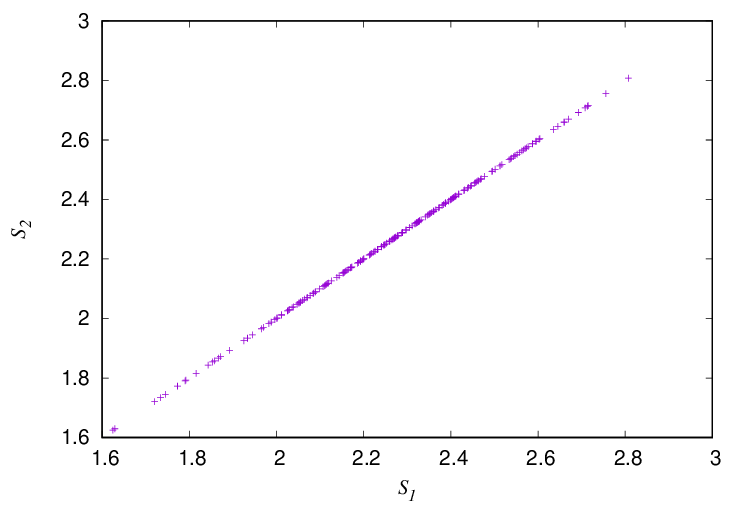}
\includegraphics[clip, width=6.0cm]{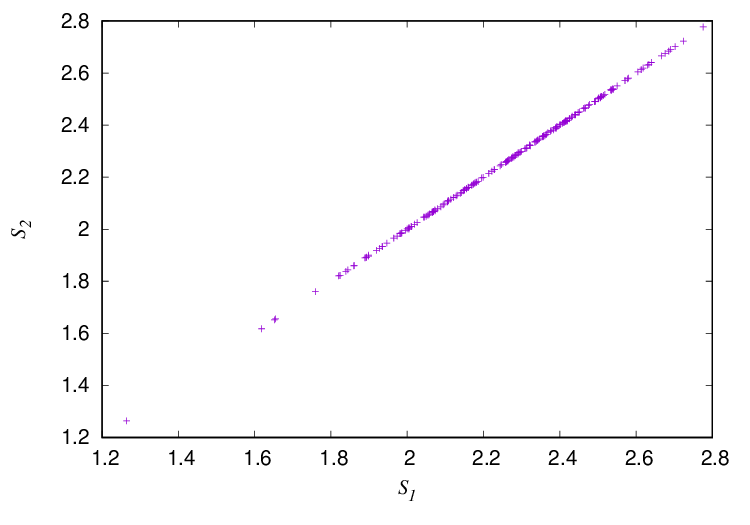}
\end{center}
\caption{Relations between $\left\langle s_1 \right\rangle^{*}$ and $\left\langle s_2 \right\rangle^{*}$ when player $1$ uses the fair ZD strategy (\ref{eq:fairZDS_g0g0_1}) and (\ref{eq:fairZDS_g0g0_2}) with $c_1$ satisfying the equality condition of Eq. (\ref{eq:c1_g0g0}) and player $2$ uses $200$ randomly generated memory-one strategies. The payoffs in the left panel are set to $\left( \bar{R}, \bar{S}, \bar{T}, \bar{P} \right)=(3, 0, 5, 1)$ and $\left( \delta_{R}, \delta_{S}, \delta_{T}, \delta_{P} \right)=(1, 1, 1, 1)$ (satisfying $\delta_R + \delta_P = \delta_S + \delta_T$). The payoffs in the right panel are set to $\left( \bar{R}, \bar{S}, \bar{T}, \bar{P} \right)=(3, 0, 5, 1)$ and $\left( \delta_{R}, \delta_{S}, \delta_{T}, \delta_{P} \right)=(2, 1, 1, 1)$ (not satisfying $\delta_R + \delta_P = \delta_S + \delta_T$). Each $\left\langle s_j \right\rangle^{*}$ is calculated by time average over $10^6$ time steps. The fair ZD strategy indeed enforces a linear relation $\left\langle s_1 \right\rangle^{*}=\left\langle s_2 \right\rangle^{*}$.}
\label{fig:linear_ZDS}
\end{figure}

An intuitive explanation of Theorem \ref{thm:necsuf_fair} is as follows.
Based on $\bm{B}$, we can construct a relative payoff game \cite{DOS2012a} as in Tables \ref{table:relPD1} and \ref{table:relPD2}, where payoffs are given by the payoff difference between two players.
A relative payoff game is a zero-sum game.
Now we investigate the meaning of the conditions (\ref{eq:necsuf_1}) and (\ref{eq:necsuf_2}) in Theorem \ref{thm:necsuf_fair}.
It should be noted that the condition (\ref{eq:necsuf_1}) is equivalent to one of the following two inequalities:
\begin{align}
 && \bar{T}-\bar{S}+\delta_S+\delta_T \geq 2\delta_R \geq 2\delta_P \geq \bar{S}-\bar{T}+\delta_S+\delta_T \nonumber \\
 && \bar{T}-\bar{S}+\delta_S+\delta_T \geq 2\delta_P \geq 2\delta_R \geq \bar{S}-\bar{T}+\delta_S+\delta_T.
\end{align}
Similarly, the condition (\ref{eq:necsuf_2}) is equivalent to one of the following two inequalities:
\begin{align}
 && 2\delta_R \geq \bar{T}-\bar{S}+\delta_S+\delta_T \geq \bar{S}-\bar{T}+\delta_S+\delta_T \geq 2\delta_P \nonumber \\
 && 2\delta_P \geq \bar{T}-\bar{S}+\delta_S+\delta_T \geq \bar{S}-\bar{T}+\delta_S+\delta_T \geq 2\delta_R.
\end{align}
\begin{enumerate}
 \item $\bar{T}-\bar{S}+\delta_S+\delta_T \geq 2\delta_R \geq 2\delta_P \geq \bar{S}-\bar{T}+\delta_S+\delta_T$\\
 For this case, we can find that $D$ dominates $C$ in both states $\sigma_1$ and $\sigma_2$ in the relative payoff game.
 \item $\bar{T}-\bar{S}+\delta_S+\delta_T \geq 2\delta_P \geq 2\delta_R \geq \bar{S}-\bar{T}+\delta_S+\delta_T$\\
 For this case, we can find that $D$ dominates $C$ in both states $\sigma_1$ and $\sigma_2$ in the relative payoff game.
 \item $2\delta_R \geq \bar{T}-\bar{S}+\delta_S+\delta_T \geq \bar{S}-\bar{T}+\delta_S+\delta_T \geq 2\delta_P$\\
 For this case, we can find that $C$ dominates $D$ in state $\sigma_1$, and $D$ dominates $C$ in state $\sigma_2$ in the relative payoff game.
 \item $2\delta_P \geq \bar{T}-\bar{S}+\delta_S+\delta_T \geq \bar{S}-\bar{T}+\delta_S+\delta_T \geq 2\delta_R$\\
 For this case, we can find that $D$ dominates $C$ in state $\sigma_1$, and $C$ dominates $D$ in state $\sigma_2$ in the relative payoff game.
\end{enumerate}
Therefore, under the conditions (\ref{eq:necsuf_1}) and (\ref{eq:necsuf_2}), dominant actions exist in both states in the relative payoff game.
A fair ZD strategy may enforce $\left\langle s_1 \right\rangle^{*}=\left\langle s_2 \right\rangle^{*}$ by switching the dominant action (obtaining large relative payoffs) and the dominated action (obtaining small relative payoffs) appropriately.
\begin{table}[tb]
  \centering
  \caption{Relative payoffs in state $\sigma_1$.}
  \begin{tabular}{|c|cc|} \hline
    & $C$ & $D$ \\ \hline
   $C$ & $2\delta_R, -2\delta_R$ & $\bar{S}-\bar{T}+\delta_S+\delta_T, \bar{T}-\bar{S}-\delta_S-\delta_T$ \\
   $D$ & $\bar{T}-\bar{S}+\delta_S+\delta_T, \bar{S}-\bar{T}-\delta_S-\delta_T$ & $2\delta_P, -2\delta_P$ \\ \hline
  \end{tabular}
  \label{table:relPD1}
  
  \centering
  \caption{Relative payoffs in state $\sigma_2$.}
  \begin{tabular}{|c|cc|} \hline
    & $C$ & $D$ \\ \hline
   $C$ & $-2\delta_R, 2\delta_R$ & $\bar{S}-\bar{T}-\delta_S-\delta_T, \bar{T}-\bar{S}+\delta_S+\delta_T$ \\
   $D$ & $\bar{T}-\bar{S}-\delta_S-\delta_T, \bar{S}-\bar{T}+\delta_S+\delta_T$ & $-2\delta_P, 2\delta_P$ \\ \hline
  \end{tabular}
  \label{table:relPD2}
\end{table}

\subsection{Relation between fair ZD strategies and the Tit-for-Tat strategy}
Next, we investigate a relation between fair ZD strategies and the Tit-for-Tat strategy.
The \emph{Tit-for-Tat} (TFT)  is a memory-one strategy which imitates the previous action of the opponent \cite{RCO1965,AxeHam1981}.
It was known that TFT is a fair ZD strategy in the standard prisoner's dilemma game \cite{PreDys2012}.
\begin{theorem}
\label{thm:TFT_fair}
TFT is a fair ZD strategy if and only if
\begin{align}
 \delta_R + \delta_P &= \delta_S + \delta_T.
 \label{eq:condition_TFT_fair}
\end{align}
\end{theorem}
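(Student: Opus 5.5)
The plan is a direct substitution: write TFT for player 1 as a memory-one strategy, insert it into the vector equation (\ref{eq:fair_vec}), and ask whether the resulting linear system in the coefficients is solvable. TFT of player 1 is
\begin{align}
 T_1\left( C | \sigma, \bm{a}^\prime, \sigma^\prime \right) = \delta_{a_2^\prime, C}
\end{align}
for all $\sigma$, $\sigma^\prime$ and $a_1^\prime$. In particular, $T_1(C|\sigma_2, \cdot, \sigma_1)$ and $T_1(C|\sigma_1, \cdot, \sigma_2)$ both take the values $(1,0,1,0)$ on the previous profiles $(C,C),(C,D),(D,C),(D,D)$.

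By Eq. (\ref{eq:e_PD}), every combination $\sum_{a_1,\sigma} c_{a_1,\sigma}\bm{\hat{T}}_1(a_1,\sigma)$ can be rewritten as $c_1\bm{\hat{T}}_1(C,\sigma_1)+c_2\bm{\hat{T}}_1(C,\sigma_2)+c_\mathrm{E}\bm{e}$. Hence TFT is a fair ZD strategy if and only if the system (\ref{eq:fairZD_e}), with the TFT values inserted, has a real solution $(c_1,c_2,c_\mathrm{E})$. There are no feasibility inequalities to impose, because the strategy is fixed and only the coefficients are unknown. The nondegeneracy requirement in Definition \ref{def:ZDS} holds automatically, since $\bm{B}$ has the component $\bar{T}-\bar{S}+\delta_S+\delta_T$ and $\bar{S}-\bar{T}+\delta_S+\delta_T$ in two positions whose difference is $2(\bar{T}-\bar{S})\neq 0$, so $\bm{B}\neq \bm{0}$.

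After substitution the eight rows read:
\begin{align}
 -c_1+c_2-c_\mathrm{E} &= 2\delta_R, \nonumber \\
 -c_1-c_\mathrm{E} &= \bar{S}-\bar{T}+\delta_S+\delta_T, \nonumber \\
 c_2-c_\mathrm{E} &= \bar{T}-\bar{S}+\delta_S+\delta_T, \nonumber \\
 -c_\mathrm{E} &= 2\delta_P,
\end{align}
together with rows 5--8, which are exactly the negatives of rows 1, 3, 2 and 4. The system therefore has four independent equations in three unknowns. Rows 4, 2 and 3 give the unique candidate
\begin{align}
 c_\mathrm{E}&=-2\delta_P, \nonumber \\
 c_1&=2\delta_P-(\bar{S}-\bar{T}+\delta_S+\delta_T), \nonumber \\
 c_2&=\bar{T}-\bar{S}+\delta_S+\delta_T-2\delta_P.
\end{align}
Substituting into row 1 gives $-c_1+c_2-c_\mathrm{E}=2(\delta_S+\delta_T)-2\delta_P$. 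Row 1 then holds if and only if $\delta_R+\delta_P=\delta_S+\delta_T$, which is Eq. (\ref{eq:condition_TFT_fair}).

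This establishes both directions. If the condition holds, the coefficients above realize $\bm{B}$, so TFT is a fair ZD strategy. If TFT is a fair ZD strategy, any admissible coefficients reduce to a solution $(c_1,c_2,c_\mathrm{E})$ of the system, which forces the condition.

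The only step needing care is the reduction from four free coefficients $c_{a_1,\sigma}$ to the triple $(c_1,c_2,c_\mathrm{E})$. This reduction is needed for the necessity direction, because it is what makes the candidate solution unique. It holds because Eq. (\ref{eq:e_PD}) expresses $\bm{\hat{T}}_1(D,\sigma_1)$ and $\bm{\hat{T}}_1(D,\sigma_2)$ through $\bm{e}$ and the $C$-vectors.

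The one remaining check is the row-by-row verification that rows 5--8 duplicate rows 1--4 under TFT. This is a routine but error-prone calculation, so I would redo it explicitly for each row.
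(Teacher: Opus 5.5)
Your proof is correct and takes essentially the paper's route: substitute the TFT transition probabilities into the linear system for the Press--Dyson vectors and check solvability. The only difference is parametrization. You use the reduced triple $(c_1,c_2,c_\mathrm{E})$ coming from Eq.~(\ref{eq:e_PD}), which makes the candidate solution unique, whereas the paper keeps the four redundant coefficients $c_{a_1,\sigma}$ and equates two expressions for $c_{C,\sigma_2}-c_{D,\sigma_2}$. Under the condition, your coefficients coincide with those of the paper's explicit decomposition.
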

\begin{proof}
It should be noted that the Press-Dyson vectors for TFT of player 1 are written as
\begin{align}
 \bm{\hat{T}}_1\left( C, \sigma_1 \right) = \left(
\begin{array}{c}
-1 \\
-1 \\
0 \\
0 \\
1 \\
0 \\
1 \\
0
\end{array}
\right), \quad
 \bm{\hat{T}}_1\left( D, \sigma_1 \right) = \left(
\begin{array}{c}
0 \\
0 \\
-1 \\
-1 \\
0 \\
1 \\
0 \\
1
\end{array}
\right), \quad
\bm{\hat{T}}_1\left( C, \sigma_2 \right) = \left(
\begin{array}{c}
1 \\
0 \\
1 \\
0 \\
-1 \\
-1 \\
0 \\
0
\end{array}
\right), \quad
 \bm{\hat{T}}_1\left( D, \sigma_2 \right) = \left(
\begin{array}{c}
0 \\
1 \\
0 \\
1 \\
0 \\
0 \\
-1 \\
-1
\end{array}
\right).
\end{align}
TFT is a fair ZD strategy if and only if there exist coefficients $\left\{ c_{a_j, \sigma} \right\}$ such that
\begin{align}
 \sum_{a_j, \sigma} c_{a_j, \sigma} \bm{\hat{T}}_1\left( a_j, \sigma \right) &= \bm{s}_1 - \bm{s}_2,
\end{align}
that is,
\begin{align}
\left(
\begin{array}{c}
-c_{C, \sigma_1} + c_{C, \sigma_2} \\
-c_{C, \sigma_1} + c_{D, \sigma_2} \\
-c_{D, \sigma_1} + c_{C, \sigma_2} \\
-c_{D, \sigma_1} + c_{D, \sigma_2} \\
c_{C, \sigma_1} - c_{C, \sigma_2}  \\
c_{D, \sigma_1} - c_{C, \sigma_2}  \\
c_{C, \sigma_1} - c_{D, \sigma_2}  \\
c_{D, \sigma_1} - c_{D, \sigma_2} 
\end{array}
\right) = \left(
\begin{array}{c}
2\delta_R \\
\bar{S}-\bar{T}+\delta_S+\delta_T \\
\bar{T}-\bar{S}+\delta_S+\delta_T \\
2\delta_P \\
-2\delta_R \\
\bar{S}-\bar{T}-\delta_S-\delta_T \\
\bar{T}-\bar{S}-\delta_S-\delta_T \\
-2\delta_P
\end{array}
\right).
\end{align}
Then, we find that
\begin{align}
 c_{C, \sigma_2} - c_{D, \sigma_2} &= 2\delta_R - \left( \bar{S}-\bar{T}+\delta_S+\delta_T  \right)
\end{align}
and
\begin{align}
 c_{C, \sigma_2} - c_{D, \sigma_2} &= \bar{T}-\bar{S}+\delta_S+\delta_T - 2\delta_P.
\end{align}
Therefore, these two quantities must be equal to each other, which is equivalent to Eq. (\ref{eq:condition_TFT_fair}).
It should be noted that this condition means that unilateral deviation from $(C,C)$ by the opponent can be compensated in any cycles.
That is, both the cycle $(C, C, \sigma_1) \rightarrow (C, D, \sigma_2) \rightarrow (D, C, \sigma_1) \rightarrow (C, C, \sigma_2) \rightarrow \cdots$ and the cycle $(C, C, \sigma_1) \rightarrow (C, D, \sigma_2) \rightarrow (D, D, \sigma_1) \rightarrow (D, C, \sigma_2) \rightarrow \cdots$ must result in zero total relative payoffs; See the last paragraph of this subsection.

Conversely, if the condition (\ref{eq:condition_TFT_fair}) holds, we find
\begin{align}
 \bm{s}_1 - \bm{s}_2 &= \left(
\begin{array}{c}
2\delta_R \\
\bar{S}-\bar{T}+\delta_R+\delta_P \\
\bar{T}-\bar{S}+\delta_R+\delta_P \\
2\delta_P \\
-2\delta_R \\
\bar{S}-\bar{T}-\delta_R-\delta_P \\
\bar{T}-\bar{S}-\delta_R-\delta_P \\
-2\delta_P
\end{array}
\right) = \left(
\begin{array}{c}
2\delta_R \\
\bar{S}-\bar{T}+\delta_R-\delta_P+2\delta_P \\
\bar{T}-\bar{S}-\delta_R+\delta_P+2\delta_R \\
2\delta_P \\
-2\delta_R \\
\bar{S}-\bar{T}+\delta_R-\delta_P -2\delta_R \\
\bar{T}-\bar{S}-\delta_R+\delta_P -2\delta_P \\
-2\delta_P
\end{array}
\right) \nonumber \\
  &= 2\delta_R \bm{\hat{T}}_1\left( C, \sigma_2 \right) + 2\delta_P \bm{\hat{T}}_1\left( D, \sigma_2 \right) + \left( \bar{T}-\bar{S}-\delta_R+\delta_P \right) \left[ \bm{\hat{T}}_1\left( C, \sigma_1 \right) + \bm{\hat{T}}_1\left( C, \sigma_2 \right) \right],
\end{align}
which means that TFT is a fair ZD strategy.
\end{proof}

Again, in contrast to the standard prisoner's dilemma game \cite{PreDys2012}, TFT is not necessarily a fair ZD strategy in the periodic prisoner's dilemma game.
The condition (\ref{eq:condition_TFT_fair}) can be rewritten as
\begin{align}
 R^{(1)} + P^{(1)} - T^{(1)} - S^{(1)} &= R^{(2)} + P^{(2)} - T^{(2)} - S^{(2)}.
\end{align}
This condition implies that some baseline in state $\sigma_1$ is equivalent to that in state $\sigma_2$.
We remark that this condition is equivalent to the condition that the asymmetric prisoner's dilemma game becomes a potential game \cite{Ued2025}.
Originally, in Ref. \cite{MMHet2025}, the case $\left( R^{(1)}, S^{(1)}, T^{(1)}, P^{(1)} \right) = \left( b, (b-c)/2, b, (b-c)/2 \right)$ and $\left( R^{(2)}, S^{(2)}, T^{(2)}, P^{(2)} \right) = \left( 0, 0, b/2, b/2 \right)$ with $b>0$ and $c>0$ was investigated.
For such parameters, the condition (\ref{eq:condition_TFT_fair}) is satisfied, and therefore TFT is a fair ZD strategy.

In order to check the validity of this Theorem, we have performed numerical simulations.
In Figure \ref{fig:linear_TFT}, we have provided our numerical results, with the same parameter values as Figure \ref{fig:linear_ZDS}.
In the left panel, we display the result for the case $\left( \bar{R}, \bar{S}, \bar{T}, \bar{P} \right)=(3, 0, 5, 1)$ and $\left( \delta_{R}, \delta_{S}, \delta_{T}, \delta_{P} \right)=(1, 1, 1, 1)$, which satisfies the condition (\ref{eq:condition_TFT_fair}).
In the right panel, we display the result for the case $\left( \bar{R}, \bar{S}, \bar{T}, \bar{P} \right)=(3, 0, 5, 1)$ and $\left( \delta_{R}, \delta_{S}, \delta_{T}, \delta_{P} \right)=(2, 1, 1, 1)$, which does not satisfy the condition (\ref{eq:condition_TFT_fair}).
The strategy of player $2$ is given by $200$ randomly generated memory-one strategies.
Each $\left\langle s_j \right\rangle^{*}$ is calculated by time average over $10^6$ time steps.
We find that TFT enforces a linear relation $\left\langle s_1 \right\rangle^{*}=\left\langle s_2 \right\rangle^{*}$ in the left panel, whereas TFT does not enforce a linear relation $\left\langle s_1 \right\rangle^{*}=\left\langle s_2 \right\rangle^{*}$ in the right panel.
This result is consistent with Theorem \ref{thm:TFT_fair}.
\begin{figure}
\begin{center}
\includegraphics[clip, width=6.0cm]{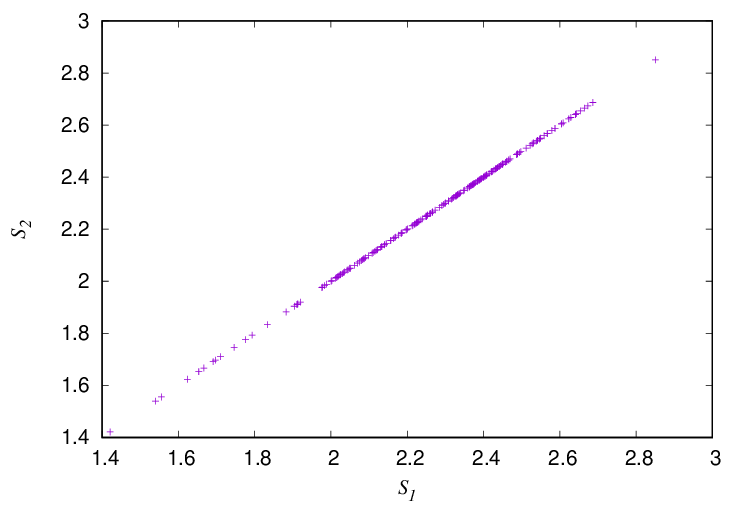}
\includegraphics[clip, width=6.0cm]{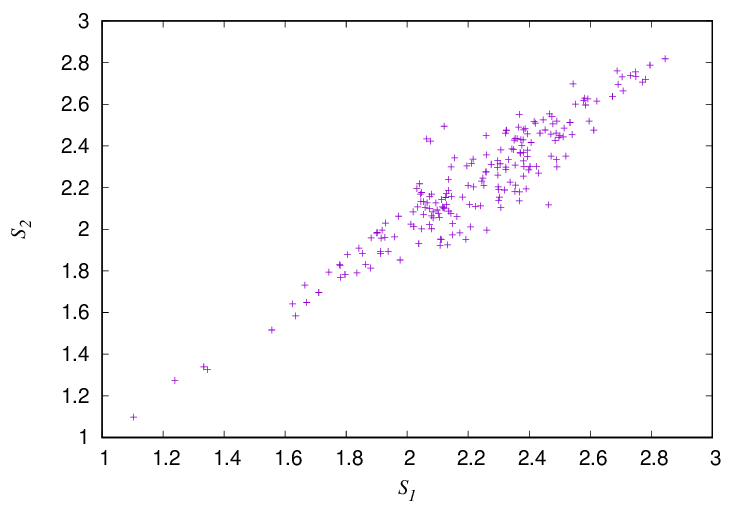}
\end{center}
\caption{Relations between $\left\langle s_1 \right\rangle^{*}$ and $\left\langle s_2 \right\rangle^{*}$ when player $1$ uses TFT and player $2$ uses $200$ randomly generated memory-one strategies. The payoffs in the left panel are set to $\left( \bar{R}, \bar{S}, \bar{T}, \bar{P} \right)=(3, 0, 5, 1)$ and $\left( \delta_{R}, \delta_{S}, \delta_{T}, \delta_{P} \right)=(1, 1, 1, 1)$ (satisfying $\delta_R + \delta_P = \delta_S + \delta_T$). The payoffs in the right panel are set to $\left( \bar{R}, \bar{S}, \bar{T}, \bar{P} \right)=(3, 0, 5, 1)$ and $\left( \delta_{R}, \delta_{S}, \delta_{T}, \delta_{P} \right)=(2, 1, 1, 1)$ (not satisfying $\delta_R + \delta_P = \delta_S + \delta_T$). Each $\left\langle s_j \right\rangle^{*}$ is calculated by time average over $10^6$ time steps. TFT enforces a linear relation $\left\langle s_1 \right\rangle^{*}=\left\langle s_2 \right\rangle^{*}$ in the left panel, whereas TFT does not enforce a linear relation $\left\langle s_1 \right\rangle^{*}=\left\langle s_2 \right\rangle^{*}$ in the right panel.}
\label{fig:linear_TFT}
\end{figure}

An intuitive explanation of Theorem \ref{thm:TFT_fair} is as follows.
Suppose that player $1$ adopts TFT.
In order for TFT to be a fair ZD strategy, TFT must not be exploited by the opponent in any cycle.
For example, if we consider a cycle
\begin{align}
 (C, C, \sigma_1) & \rightarrow (C, C, \sigma_2) \rightarrow \cdots,
\end{align}
the total payoff difference of this cycle is
\begin{align}
 2\delta_R + \left( -2\delta_R \right) &= 0.
\end{align}
Similarly, if we consider a cycle
\begin{align}
 (C, C, \sigma_1) & \rightarrow (C, D, \sigma_2) \rightarrow (D, C, \sigma_1) \rightarrow (C, C, \sigma_2) \rightarrow \cdots,
\end{align}
the total payoff difference of this cycle is
\begin{align}
 2\delta_R + \left( \bar{S}-\bar{T}-\delta_S-\delta_T \right) + \left( \bar{T}-\bar{S}+\delta_S+\delta_T \right) + \left( -2\delta_R \right) &= 0.
\end{align}
Now, let us consider a cycle
\begin{align}
 (C, C, \sigma_1) & \rightarrow (C, D, \sigma_2) \rightarrow (D, D, \sigma_1) \rightarrow (D, C, \sigma_2) \rightarrow \cdots.
\end{align}
The total payoff difference of this cycle is
\begin{align}
 2\delta_R + \left( \bar{S}-\bar{T}-\delta_S-\delta_T \right) + 2\delta_P + \left( \bar{T}-\bar{S}-\delta_S-\delta_T \right) &= 2 \left( \delta_R + \delta_P - \delta_S - \delta_T \right).
\end{align}
If TFT is a fair ZD strategy, TFT must enforce $\left\langle s_1 \right\rangle^* = \left\langle s_2 \right\rangle^*$ even if this cycle is infinitely repeated.
Therefore, we conclude that $\delta_R + \delta_P - \delta_S - \delta_T=0$ must hold.

\subsection{On the existence of fair partial ZD strategies}
Here we consider the consequence of Proposition \ref{prop:existence_partialZDS}.
\begin{theorem}
\label{thm:fair_partial}
The necessary and sufficient condition in Proposition \ref{prop:existence_partialZDS} for the existence of fair partial ZD strategies of player 1 is equivalent to
\begin{align}
 \delta_R &= 0 \nonumber \\
 \delta_P &= 0 \nonumber \\
 \left| \delta_S+\delta_T \right| &\leq \bar{T}-\bar{S}.
 \label{eq:fair_partial}
\end{align}
\end{theorem}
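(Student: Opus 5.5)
We apply Proposition \ref{prop:existence_partialZDS} directly with $B=s_1-s_2$, whose values are the entries of $\bm{B}$. Write $D_1:=\bar{S}-\bar{T}+\delta_S+\delta_T$ and $D_2:=\bar{T}-\bar{S}+\delta_S+\delta_T$, as in the proof of Theorem \ref{thm:necessary_fair}. For each action $a_1\in\{C,D\}$, we list the four values of $B(a_1,a_2,\sigma)$ as $a_2$ and $\sigma$ range over their sets. For $a_1=C$ these values are
\begin{align}
 \left\{ 2\delta_R, D_1, -2\delta_R, -D_2 \right\},
\end{align}
and for $a_1=D$ they are
\begin{align}
 \left\{ D_2, 2\delta_P, -D_1, -2\delta_P \right\}.
\end{align}
The key structural observation is that each row contains a value together with its negative: $\pm 2\delta_R$ in the $C$-row and $\pm 2\delta_P$ in the $D$-row. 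This comes from the antisymmetry between the two states.

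We then test condition (\ref{eq:autocratic_partial}), which asks for an $\overline{a}_1$ whose row is entirely $\geq 0$ and an $\underline{a}_1$ whose row is entirely $\leq 0$. For $a_1=C$, the row being entirely $\geq 0$ forces $\delta_R=0$, and then additionally $D_1\geq 0$ and $D_2\leq 0$. This contradicts $D_1<D_2$, which holds because $D_2-D_1=2(\bar{T}-\bar{S})>0$. Hence $\overline{a}_1\neq C$, and so $\overline{a}_1=D$. A nonnegative $D$-row forces $\delta_P=0$, $D_2\geq 0$ and $D_1\leq 0$. By the symmetric argument, $\underline{a}_1=C$ is forced: a nonpositive $D$-row would require $D_2\leq 0\leq D_1$, which is impossible. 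A nonpositive $C$-row forces $\delta_R=0$, $D_1\leq 0$ and $D_2\geq 0$.

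Combining the two, the conditions are $\delta_R=\delta_P=0$ and $D_1\leq 0\leq D_2$. The latter is exactly $-(\bar{T}-\bar{S})\leq \delta_S+\delta_T\leq \bar{T}-\bar{S}$, that is, $|\delta_S+\delta_T|\leq \bar{T}-\bar{S}$. This gives necessity. For the converse, under (\ref{eq:fair_partial}) the choices $\overline{a}_1=D$ and $\underline{a}_1=C$ satisfy (\ref{eq:autocratic_partial}) by the same row listing.

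Finally, we check that $B$ is not identically zero, as Proposition \ref{prop:existence_partialZDS} requires. This holds automatically, since $D_2-D_1=2(\bar{T}-\bar{S})>0$ means $D_1$ and $D_2$ cannot both vanish. There is no real obstacle in this argument. The only care needed is to rule out the pairings $\overline{a}_1=\underline{a}_1$ and $(\overline{a}_1,\underline{a}_1)=(C,D)$, and both exclusions rest solely on $\bar{T}>\bar{S}$.
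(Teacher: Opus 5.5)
Your proof is correct and is essentially the paper's argument: the same row-by-row listing of $B(a_1,a_2,\sigma)$, with the pairs $\pm 2\delta_R$, $\pm 2\delta_P$ and the gap $D_2-D_1=2(\bar{T}-\bar{S})>0$ doing all the work. The differences are only organizational: you split on which action can serve as $\overline{a}_1$ and $\underline{a}_1$, whereas the paper splits on whether $\delta_R$ and $\delta_P$ vanish, so your version collapses its four cases into two observations; you also explicitly verify that $B$ is not identically zero, which the paper leaves implicit.
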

\begin{proof}
The necessary and sufficient condition for the existence of fair partial ZD strategies of player 1 is Eq. (\ref{eq:autocratic_partial_minimax}).
By introducing $D_1:= \bar{S}-\bar{T}+\delta_S+\delta_T$ and $D_2:= \bar{T}-\bar{S}+\delta_S+\delta_T$, these conditions are explicitly written as
\begin{align}
 \max\left\{ \min\left\{ 2\delta_R, D_1, -2\delta_R, -D_2 \right\}, \min\left\{ D_2, 2\delta_P, -D_1, -2\delta_P \right\} \right\} &\geq 0 \nonumber \\
 \min\left\{ \max\left\{ 2\delta_R, D_1, -2\delta_R, -D_2 \right\}, \max\left\{ D_2, 2\delta_P, -D_1, -2\delta_P \right\} \right\} &\leq 0. 
 \label{eq:fairpartial_minimax}
\end{align}

If $\delta_R\neq 0$ and $\delta_P\neq 0$, these inequalities cannot be satisfied, because
\begin{align}
 \min\left\{ 2\delta_R, D_1, -2\delta_R, -D_2 \right\} &\leq -2\left| \delta_R \right| <0 \nonumber \\
 \min\left\{ D_2, 2\delta_P, -D_1, -2\delta_P \right\} &\leq -2\left| \delta_P \right| <0,
\end{align}
for example.

If $\delta_R= 0$ and $\delta_P\neq 0$, the first inequality in Eq. (\ref{eq:fairpartial_minimax}) becomes
\begin{align}
 \max\left\{ \min\left\{ 0, D_1, -D_2 \right\}, \min\left\{ D_2, 2\delta_P, -D_1, -2\delta_P \right\} \right\} &\geq 0.
\end{align}
Due to
\begin{align}
 \min\left\{ D_2, 2\delta_P, -D_1, -2\delta_P \right\} &\leq -2\left| \delta_P \right| <0,
\end{align}
the inequality holds if and only if $D_1\geq 0$ and $-D_2\geq 0$, that is,
\begin{align}
 \bar{S}-\bar{T}+\delta_S+\delta_T &\geq 0 \nonumber \\
 \bar{S}-\bar{T}-\delta_S-\delta_T &\geq 0.
\end{align}
Then we obtain $\bar{S}-\bar{T}\geq 0$, leading to contradiction.

If $\delta_R\neq 0$ and $\delta_P= 0$, the second inequality in Eq. (\ref{eq:fairpartial_minimax}) becomes
\begin{align}
 \min\left\{ \max\left\{ 2\delta_R, D_1, -2\delta_R, -D_2 \right\}, \max\left\{ D_2, 0, -D_1 \right\} \right\} &\leq 0. 
\end{align}
Due to
\begin{align}
 \max\left\{ 2\delta_R, D_1, -2\delta_R, -D_2 \right\} &\geq 2\left| \delta_R \right| >0,
\end{align}
the inequality holds if and only if $D_2\leq 0$ and $-D_1\leq 0$, that is,
\begin{align}
 \bar{T}-\bar{S}+\delta_S+\delta_T &\leq 0 \nonumber \\
 \bar{T}-\bar{S}-\delta_S-\delta_T &\leq 0.
\end{align}
Then we obtain $\bar{T}-\bar{S}\leq 0$, leading to contradiction.

If $\delta_R= 0$ and $\delta_P= 0$, the inequalities (\ref{eq:fairpartial_minimax}) become
\begin{align}
 \max\left\{ \min\left\{ 0, D_1, -D_2 \right\}, \min\left\{ D_2, 0, -D_1 \right\} \right\} &\geq 0 \nonumber \\
 \min\left\{ \max\left\{ 0, D_1, -D_2 \right\}, \max\left\{ D_2, 0, -D_1 \right\} \right\} &\leq 0,
\end{align}
that is, 
\begin{align}
 \max\left\{ \min\left\{ 0, \bar{S}-\bar{T}-\left| \delta_S+\delta_T \right| \right\}, \min\left\{ 0, \bar{T}-\bar{S}-\left| \delta_S+\delta_T \right| \right\} \right\} &\geq 0 \nonumber \\
 \min\left\{ \max\left\{ 0, \bar{S}-\bar{T}+\left| \delta_S+\delta_T \right| \right\}, \max\left\{ 0, \bar{T}-\bar{S}+\left| \delta_S+\delta_T \right| \right\} \right\} &\leq 0,
\end{align}
The first inequality holds if and only if
\begin{align}
 \bar{T}-\bar{S}-\left| \delta_S+\delta_T \right| &\geq 0.
\end{align}
The second inequality holds if and only if
\begin{align}
 \bar{S}-\bar{T}+\left| \delta_S+\delta_T \right| &\leq 0,
\end{align}
It should be noted that these two inequalities are equivalent.
Therefore, a fair partial ZD strategy exists if and only if the condition (\ref{eq:fair_partial}) is satisfied.
Indeed, we can define $\overline{a}_1=D$ and $\underline{a}_1=C$ for this case.
\end{proof}

We remark that the condition in Theorem \ref{thm:fair_partial} is not necessarily contained in Eq. (\ref{eq:condition_TFT_fair}).
Therefore, if $\delta_R = 0$, $\delta_P = 0$, and $\bar{T}-\bar{S}\geq \left| \delta_S+\delta_T \right| \neq 0$, there exist fair ZD strategies which are not TFT.
In contrast, if $\delta_R \neq 0$ and $\delta_P \neq 0$ with $\delta_R + \delta_P = \delta_S + \delta_T$, there exist fair ZD strategies which are not fair partial ZD strategies.

\section{Concluding remarks}
\label{sec:conclusion}
In this paper, we made two main contributions.
First, we specified a necessary and sufficient condition for the existence of fair ZD strategies in the periodic prisoner's dilemma game (Theorem \ref{thm:necsuf_fair}).
We found that this existence condition is quite different from a necessary condition in Theorem \ref{thm:necessary_fair}, which is direct consequence of Proposition \ref{prop:necessary_ZDS}.
In repeated games, the necessary condition in Proposition \ref{prop:necessary_ZDS} is also a sufficient condition for the existence of ZD strategies \cite{Ued2022b}.
Therefore, this result highlights difference between ZD strategies in repeated games and those in stochastic games.

Second, we also specified the relation between TFT and fair ZD strategies (Theorem \ref{thm:TFT_fair}).
In the standard repeated prisoner's dilemma game, TFT is always a fair ZD strategy \cite{PreDys2012}.
However, in the periodic prisoner's dilemma game, this equivalence holds only in special cases.
In other words, simple imitation of the opponent is not always unbeatable \cite{DOS2014,Ued2022}.
This result also characterizes complexity of stochastic games.
The results of this paper are summarized in Table \ref{tbl:summary}.
\begin{table}
\caption{Summary of this study.}
\begin{tabular}{|c|l|}
  \hline 
  The condition for the existence of fair ZD strategies & $\left\{
\begin{array}{c}
\bar{S}-\bar{T}+\delta_S+\delta_T \leq 2\min\left\{ \delta_R, \delta_P \right\} \\
\bar{T}-\bar{S}+\delta_S+\delta_T \geq 2\max\left\{ \delta_R, \delta_P \right\}
\end{array}
\right.$\\
   & or\\
   & $\left\{
\begin{array}{c}
\bar{S}-\bar{T}+\delta_S+\delta_T \geq 2\min\left\{ \delta_R, \delta_P \right\} \\
\bar{T}-\bar{S}+\delta_S+\delta_T \leq 2\max\left\{ \delta_R, \delta_P \right\}
\end{array}
\right.$ \\
  \hline
  The condition under which TFT is a fair ZD strategy & $\delta_R + \delta_P = \delta_S + \delta_T$\\
  \hline
  The condition for the existence of fair partial ZD strategies & $\delta_R = 0$, $\delta_P = 0$, $\left| \delta_S+\delta_T \right| \leq \bar{T}-\bar{S}$ \\
  \hline 
\end{tabular}
\label{tbl:summary}
\end{table}

Discrepancy between the necessary condition in Proposition \ref{prop:necessary_ZDS} and the existence condition of ZD strategies in stochastic games is similar to discrepancy found in repeated games with discounting \cite{HTS2015,McAHau2016} or discrepancy found in repeated games with imperfect monitoring \cite{HRZ2015,MamIch2020}.
In these situations, we cannot also choose Press-Dyson vectors arbitrarily.
In repeated games with discounting, Press-Dyson vectors are restricted by a discount factor.
In repeated games with imperfect monitoring, Press-Dyson vectors are restricted by imperfect observation.
Similarly, in stochastic games, Press-Dyson vectors are restricted by the transition probability of an environmental state; See Eq. (\ref{eq:PD_SG}), where players cannot choose $T_\mathrm{E}$.
Finding general conditions for the existence of ZD strategies under such restrictions is a significant open problem.

Although we do not know the situations which are exactly described by the periodic prisoner's dilemma game, this game can be regarded as a toy model where two roles (such as offense and defense) of players alternate deterministically.
Examples of such situations include baseball and role-playing video games, where the ``offense'' role and the ``defense'' role are alternately played.
We believe that the periodic prisoner's dilemma game is one of the simplest models of such situations.
We also remark that the periodic prisoner's dilemma game in this paper is different from alternating games \cite{McAHau2017,PNH2022}, where players alternately update their actions.
Our results show that, even if both players can play two roles equally, it may be impossible for one player to unilaterally equalize the payoffs of two players.

The periodic prisoner's dilemma game is different from a stochastic prisoner's dilemma game which has recently been substantially investigated \cite{HSCN2018,LiuWu2022,WZWL2026,ZYRW2026}.
Although the latter is also a two-state stochastic game, two states describe better and worse environmental states, respectively, and typically, transition to a worse state is coupled to defection.
The typical payoff matrices are given in Tables \ref{table:PD1_Hilbe} and \ref{table:PD2_Hilbe}, with $T_n > R_n > P_n > S_n$ for $n=1, 2$.
For such a game, fair partial ZD strategies always exist, as a direct consequence of Proposition \ref{prop:existence_partialZDS} with $\overline{a}_j=D$ and $\underline{a}_j=C$.
This is because each stage game is a symmetric game similarly as the standard prisoner's dilemma game.
In contrast, in the periodic prisoner's dilemma game, the roles of two states are symmetric but each stage game is not always a symmetric game (Tables \ref{table:PD1} and \ref{table:PD2}), and fair partial ZD strategies do not always exist.
Therefore, even if a stochastic game is a two-state two-player two-action game, the existence condition of ZD strategies is quite different.
\begin{table}[tb]
  \centering
  \caption{Payoffs in state $\sigma_1$.}
  \begin{tabular}{|c|cc|} \hline
    & $C$ & $D$ \\ \hline
   $C$ & $R_1, R_1$ & $S_1, T_1$ \\
   $D$ & $T_1, S_1$ & $P_1, P_1$ \\ \hline
  \end{tabular}
  \label{table:PD1_Hilbe}
  
  \centering
  \caption{Payoffs in state $\sigma_2$.}
  \begin{tabular}{|c|cc|} \hline
    & $C$ & $D$ \\ \hline
   $C$ & $R_2, R_2$ & $S_2, T_2$ \\
   $D$ & $T_2, S_2$ & $P_2, P_2$ \\ \hline
  \end{tabular}
  \label{table:PD2_Hilbe}
\end{table}

In this paper, we focused on only fair ZD strategies.
In the repeated prisoner's dilemma game, there are other ZD strategies, such as the equalizer strategies \cite{BNS1997}, the extortionate strategies \cite{PreDys2012}, and the generous strategies \cite{StePlo2013}.
The equalizer strategies unilaterally set the opponent's payoff.
The extortionate strategies unilaterally obtain the payoff not less than that of the opponent.
The generous strategies unilaterally obtain the payoff not more than that of the opponent but promote mutual cooperation.
We expect that the existence condition of other ZD strategies in the periodic prisoner's dilemma game is also more complicated than that in the repeated prisoner's dilemma game.
In a special case, we can show that the equalizer strategy exists as in \ref{app:equalizer}.
The existence condition of other ZD strategies in the periodic prisoner's dilemma game should be investigated in future.

Finally, we remark on the size of memory of ZD strategies.
In this paper, we consider only memory-one ZD strategies.
Since there are two states in the periodic prisoner's dilemma game, memory-$m$ ZD strategies with $m\geq 2$ \cite{Ued2021b} may be useful to control payoffs in the game.
In the repeated prisoner's dilemma game ($\delta_R=\delta_S=\delta_T=\delta_P=0$), TFT is a fair ZD strategy because it controls the cumulated payoff difference between two players within $\bar{T}-\bar{S}$.
Concretely, if player $1$ adopts TFT and we consider the transition $(C,C) \rightarrow (C,D) \rightarrow (D,D)$, the cumulated payoff difference is $\bar{S}-\bar{T}$, and this cannot decrease furthermore.
However, in the periodic prisoner's dilemma game, a memory-two strategy which imitates the opponent's action before last (memory-two TFT) may not have such a property due to finiteness of $\delta$.
For example, when the condition (\ref{eq:nec_1}) in Theorem \ref{thm:necessary_fair} holds, player $1$ loses (wins) for $(C,D)$ and $(D,C)$ and wins (loses) for $(C,C)$ and $(D,D)$ in state $\sigma_1$ ($\sigma_2$); See Tables \ref{table:relPD1} and \ref{table:relPD2}.
This is because each stage game is not necessarily a symmetric game, in contrast to the prisoner's dilemma game.
Therefore, winning or losing depends on actions of both players similarly as the matching pennies games, and such memory-two strategy cannot probably control the cumulated payoff difference.
Nevertheless, it is also expected that results on multichannel games  \cite{DHNH2020,Ued2026}, where multiple repeated games are simultaneously played, can be applied to periodic games, by regarding each channel as each state.
Analysis of the existence of fair memory-$m$ ZD strategies with $m\geq 2$ remains to be solved.

\section*{Acknowledgement}
The authors thank the anonymous reviewers for their constructive comments.
This study was supported by Toyota Riken Scholar Program and JSPS KAKENHI Grant Number JP26K21335.

\appendix
\setcounter{figure}{0}
\def\thesubsection{\Alph{section}.\arabic{subsection}}
\section{Additional numerical results}
\label{app:numerical}
In Section \ref{subsec:necsuf}, we provided numerical results for the case where the condition (\ref{eq:necsuf_1}) holds.
In this appendix, we provide additional numerical results for other cases.

\subsection{Boundary case}
According to the proof of Theorem \ref{thm:necsuf_fair}, if the equalities
\begin{align}
 2\delta_R &= \bar{T}-\bar{S}+\delta_S+\delta_T \nonumber \\
 2\delta_P &= \bar{S}-\bar{T}+\delta_S+\delta_T
 \label{eq:boundary1}
\end{align}
hold in the condition (\ref{eq:necsuf_1}), a fair ZD strategy is given by Eq. (\ref{eq:fairZDS_g0g0_0p}) with arbitrary $T_1(C | \sigma_1,  \bm{a}^\prime, \sigma_2)$.
In order to check the validity, we perform numerical simulation for the case $\left( \bar{R}, \bar{S}, \bar{T}, \bar{P} \right)=(3, 0, 5, 1)$ and $\left( \delta_{R}, \delta_{S}, \delta_{T}, \delta_{P} \right)=(3.5, 1, 1, -1.5)$ (satisfying Eq. (\ref{eq:boundary1})).
We set $T_1(C | \sigma_1,  \bm{a}^\prime, \sigma_2)=0$ for all $\bm{a}^\prime$.
The result is given in Figure \ref{fig:linear_ZDS_boundary}.
This result is consistent with Theorem \ref{thm:necsuf_fair}.
\begin{figure}
\begin{center}
\includegraphics[clip, width=6.0cm]{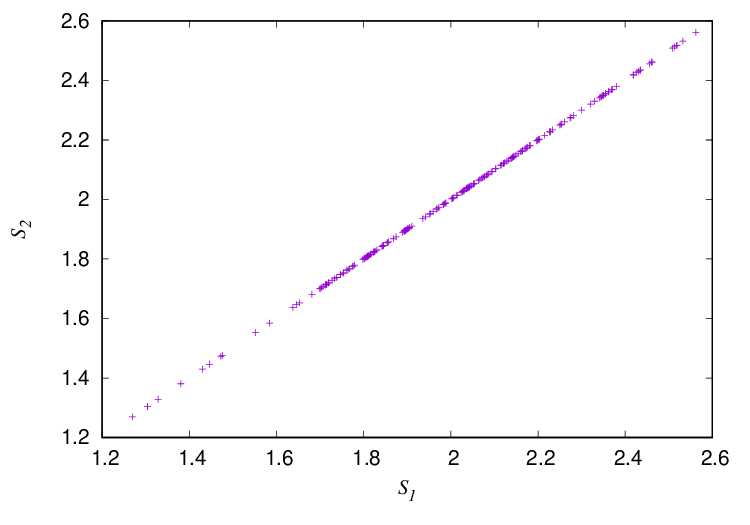}
\end{center}
\caption{A linear relation between $\left\langle s_1 \right\rangle^{*}$ and $\left\langle s_2 \right\rangle^{*}$ when player $1$ uses the fair ZD strategy (\ref{eq:fairZDS_g0g0_0p}) and $T_1(C | \sigma_1,  \bm{a}^\prime, \sigma_2)=0$ and player $2$ uses $200$ randomly generated memory-one strategies. The payoffs are set to $\left( \bar{R}, \bar{S}, \bar{T}, \bar{P} \right)=(3, 0, 5, 1)$ and $\left( \delta_{R}, \delta_{S}, \delta_{T}, \delta_{P} \right)=(3.5, 1, 1, -1.5)$. Each $\left\langle s_j \right\rangle^{*}$ is calculated by time average over $10^6$ time steps. The fair ZD strategy indeed enforces a linear relation $\left\langle s_1 \right\rangle^{*}=\left\langle s_2 \right\rangle^{*}$.}
\label{fig:linear_ZDS_boundary}
\end{figure}

\subsection{The case (\ref{eq:necsuf_2-1})}
For this case, a fair ZD strategy is given by Eqs. (\ref{eq:fairZDS_g0l0_1}) and (\ref{eq:fairZDS_g0l0_2}).
In order to check the validity, we perform numerical simulation for the case $\left( \bar{R}, \bar{S}, \bar{T}, \bar{P} \right)=(3, 0, 5, 1)$ and $\left( \delta_{R}, \delta_{S}, \delta_{T}, \delta_{P} \right)=(-3, 1, 1, 4)$ (satisfying (\ref{eq:necsuf_2-1})).
$c_1$ is given by the equality condition of Eq. (\ref{eq:c1_g0l0}).
The result is given in Figure \ref{fig:linear_ZDS_type2}.
This result is consistent with Theorem \ref{thm:necsuf_fair}.
\begin{figure}
\begin{center}
\includegraphics[clip, width=6.0cm]{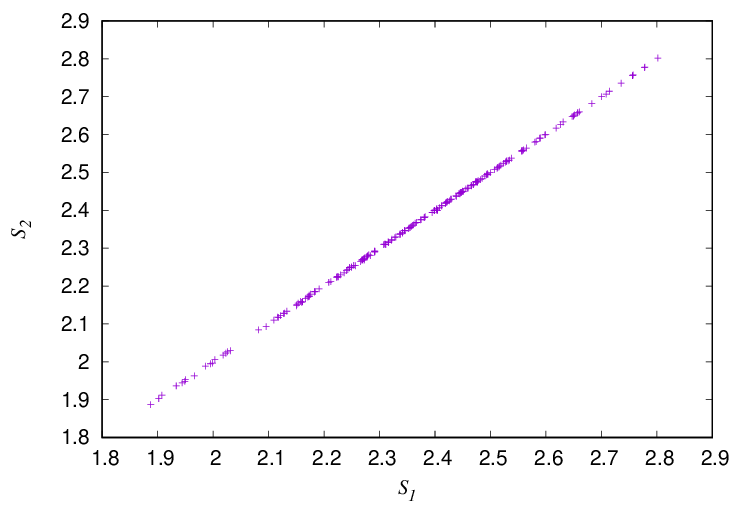}
\end{center}
\caption{A linear relation between $\left\langle s_1 \right\rangle^{*}$ and $\left\langle s_2 \right\rangle^{*}$ when player $1$ uses the fair ZD strategy (\ref{eq:fairZDS_g0l0_1}) and (\ref{eq:fairZDS_g0l0_2}) with $c_1$ satisfying the equality condition of Eq. (\ref{eq:c1_g0l0}) and player $2$ uses $200$ randomly generated memory-one strategies. The payoffs are set to $\left( \bar{R}, \bar{S}, \bar{T}, \bar{P} \right)=(3, 0, 5, 1)$ and $\left( \delta_{R}, \delta_{S}, \delta_{T}, \delta_{P} \right)=(-3, 1, 1, 4)$. Each $\left\langle s_j \right\rangle^{*}$ is calculated by time average over $10^6$ time steps. The fair ZD strategy indeed enforces a linear relation $\left\langle s_1 \right\rangle^{*}=\left\langle s_2 \right\rangle^{*}$.}
\label{fig:linear_ZDS_type2}
\end{figure}

\subsection{The case (\ref{eq:necsuf_2-2})}
For this case, a fair ZD strategy is given by Eqs. (\ref{eq:fairZDS_l0g0_1}) and (\ref{eq:fairZDS_l0g0_2}).
In order to check the validity, we perform numerical simulation for the case $\left( \bar{R}, \bar{S}, \bar{T}, \bar{P} \right)=(3, 0, 5, 1)$ and $\left( \delta_{R}, \delta_{S}, \delta_{T}, \delta_{P} \right)=(4, 1, 1, -3)$ (satisfying (\ref{eq:necsuf_2-2})).
$c_1$ is given by the equality condition of Eq. (\ref{eq:c1_l0g0}).
The result is given in Figure \ref{fig:linear_ZDS_type3}.
This result is consistent with Theorem \ref{thm:necsuf_fair}.
\begin{figure}
\begin{center}
\includegraphics[clip, width=6.0cm]{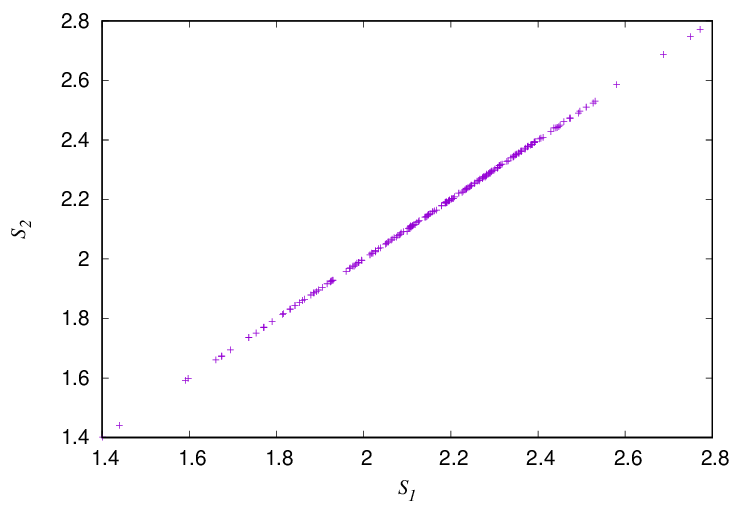}
\end{center}
\caption{A linear relation between $\left\langle s_1 \right\rangle^{*}$ and $\left\langle s_2 \right\rangle^{*}$ when player $1$ uses the fair ZD strategy (\ref{eq:fairZDS_l0g0_1}) and (\ref{eq:fairZDS_l0g0_2}) with $c_1$ satisfying the equality condition of Eq. (\ref{eq:c1_l0g0}) and player $2$ uses $200$ randomly generated memory-one strategies. The payoffs are set to $\left( \bar{R}, \bar{S}, \bar{T}, \bar{P} \right)=(3, 0, 5, 1)$ and $\left( \delta_{R}, \delta_{S}, \delta_{T}, \delta_{P} \right)=(4, 1, 1, -3)$. Each $\left\langle s_j \right\rangle^{*}$ is calculated by time average over $10^6$ time steps. The fair ZD strategy indeed enforces a linear relation $\left\langle s_1 \right\rangle^{*}=\left\langle s_2 \right\rangle^{*}$.}
\label{fig:linear_ZDS_type3}
\end{figure}

\section{Equalizer strategy in a special example}
\label{app:equalizer}
As a special example, we consider the situation \cite{MMHet2025}
\begin{align}
 \bm{s}_1 &= \left(
\begin{array}{c}
b \\
\frac{b-c}{2} \\
b \\
\frac{b-c}{2} \\
0 \\
0 \\
\frac{b}{2} \\
\frac{b}{2}
\end{array}
\right),
\quad
 \bm{s}_2 = \left(
\begin{array}{c}
0 \\
\frac{b}{2} \\
0 \\
\frac{b}{2} \\
b \\
b \\
\frac{b-c}{2} \\
\frac{b-c}{2}
\end{array}
\right)
\end{align}
with $b>0$ and $c>0$, that is, $\left( R^{(1)}, S^{(1)}, T^{(1)}, P^{(1)} \right) = \left( b, (b-c)/2, b, (b-c)/2 \right)$ and $\left( R^{(2)}, S^{(2)}, T^{(2)}, P^{(2)} \right) = \left( 0, 0, b/2, b/2 \right)$.
We look for equalizer strategies of player $1$, which correspond to
\begin{align}
 \bm{B} &= \bm{s}_2 - r\bm{1},
\end{align}
where $r\in \mathbb{R}$ and $\bm{1}$ is the vector of all ones.
When we set
\begin{align}
 T_1(C | \sigma_2,  C,C,\sigma_1) &= \frac{4r-(b-c)}{b+c} \nonumber \\
 T_1(C | \sigma_2,  C,D,\sigma_1) &= \frac{4r-(2b-c)}{b+c} \nonumber \\
 T_1(C | \sigma_2,  D,C,\sigma_1) &= \frac{4r-(b-c)}{b+c} \nonumber \\
 T_1(C | \sigma_2,  D,D,\sigma_1) &= \frac{4r-(2b-c)}{b+c}
\end{align}
with $b/2-c/4 \leq r \leq b/2$ and $T_1(C | \sigma_1,  \bm{a}^\prime,\sigma_2)$ to arbitrary values, we obtain
\begin{align}
 & -(b-r) \bm{\hat{T}}_1\left( C, \sigma_2 \right) - \left( \frac{b-c}{2} - r \right) \bm{\hat{T}}_1\left( D, \sigma_2 \right) \nonumber \\
 &= -(b-r)\left(
\begin{array}{c}
\frac{4r-(b-c)}{b+c} \\
\frac{4r-(2b-c)}{b+c} \\
\frac{4r-(b-c)}{b+c} \\
\frac{4r-(2b-c)}{b+c} \\
-1 \\
-1 \\
0 \\
0
\end{array}
\right) - \left( \frac{b-c}{2} - r \right)\left(
\begin{array}{c}
1-\frac{4r-(b-c)}{b+c} \\
1-\frac{4r-(2b-c)}{b+c} \\
1-\frac{4r-(b-c)}{b+c} \\
1-\frac{4r-(2b-c)}{b+c} \\
0 \\
0 \\
-1 \\
-1
\end{array}
\right) \nonumber \\
 &= \left(
\begin{array}{c}
-\frac{b+c}{2} \frac{4r-(b-c)}{b+c} - \frac{b-c}{2} + r \\
-\frac{b+c}{2} \frac{4r-(2b-c)}{b+c} - \frac{b-c}{2} + r \\
-\frac{b+c}{2} \frac{4r-(b-c)}{b+c} - \frac{b-c}{2} + r \\
-\frac{b+c}{2} \frac{4r-(2b-c)}{b+c} - \frac{b-c}{2} + r \\
b-r \\
b-r \\
\frac{b-c}{2} - r \\
\frac{b-c}{2} - r
\end{array}
\right) \nonumber \\
  &= \bm{s}_2 - r\bm{1}.
\end{align}
Therefore, this memory-one strategy is an equalizer strategy, which unilaterally enforces $\left\langle s_2 \right\rangle^{*} = r$.

\section*{References}

\bibliography{ZDS}

\end{document}